\documentclass[runningheads]{llncs}

\newif\ifextendedversion
\extendedversiontrue

\usepackage[T1]{fontenc}
\usepackage{hyperref}
\usepackage[obeyFinal]{todonotes}
\usepackage{amsmath}
\usepackage{amssymb}
\usepackage{booktabs,colortbl}
\usepackage[numbers,sort&compress]{natbib}
\usepackage[capitalize]{cleveref}
\usepackage{siunitx}
\usepackage{mathtools}
\usepackage[new]{old-arrows}
\usepackage{stmaryrd}
\usepackage{fontawesome}
\usepackage[newfloat,frozencache=true,cachedir=minted-cache]{minted}
\usepackage{halloweenmath}
\usepackage{centernot}
\usepackage[caption=false]{subfig}
\usepackage{cleveref}
\usepackage{chngcntr}
\usepackage{xspace}
\usepackage{siunitx}
\usepackage{enumitem}
\usepackage[nounderscore]{syntax}
\usepackage{multicol}
\usepackage{cancel}
\usepackage{lineno}
\usepackage[misc,geometry]{ifsym}

\ifextendedversion
	\usepackage[bibliography=common]{apxproof}

\else
	\usepackage[appendix=strip]{apxproof}
\fi

\usepackage{float}
\usepackage{wrapfig}

\newcommand{\Threads}{\ensuremath{T}}
\newcommand{\newtid}{\ensuremath{\nu}}
\newcommand{\Threadprototypes}{\ensuremath{T_{\mathit{templ}}}}
\newcommand{\Nodes}{\ensuremath{\mathcal{N}}}
\newcommand{\st}{\textsf{st}}
\newcommand{\InterleavingExecutions}{\ensuremath{\mathcal{I}}}
\newcommand{\LocalTraces}{\ensuremath{\mathcal{LT}}}
\newcommand{\GlobalTraces}{\ensuremath{\mathcal{GT}}}
\newcommand{\Edges}{\ensuremath{\mathcal{E}}}
\newcommand{\States}{\ensuremath{\mathcal{S}}}

\newcommand{\Globals}{\ensuremath{\mathcal{G}}}
\newcommand{\Values}{\ensuremath{\mathcal{V}}}
\newcommand{\Mutexes}{\ensuremath{\mathcal{M}}}
\newcommand{\MutexesForGlobals}{\Mutexes_{\Globals}}
\newcommand{\LocalStates}{\ensuremath{\mathcal{L}}}
\newcommand{\LocalVariables}{\ensuremath{\mathcal{X}}}
\newcommand{\GhostLocalVariables}{\ensuremath{\LocalVariables}_{\ghost}}

\newcommand{\loctrace}{\ensuremath{\mathit{lt}}}
\newcommand{\globtrace}{\ensuremath{\mathit{gt}}}
\newcommand{\interleaving}{\ensuremath{\mathit{i}}}
\newcommand{\multithreaded}{\ensuremath{\mathit{multithreaded}}}

\newcommand{\atomic}{\code{atomic}}
\newcommand{\create}[1]{\code{create(#1)}}
\newcommand{\lock}[1]{\code{lock(#1)}}
\newcommand{\unlock}[1]{\code{unlock(#1)}}
\newcommand{\assert}[1]{\code{assert(#1)}}
\newcommand{\Language}{\ensuremath{\textsf{Lang}}}
\newcommand{\LanguageWithAtomic}{\ensuremath{\textsf{Lang}_{\mathit{Atomic}}}}
\newcommand{\LanguageWithMg}{{\textsf{Lang}_{\MutexesForGlobals}}}
\newcommand{\ghost}{\mathghost}
\newcommand{\deghostifyState}{\ensuremath{\pi_{P}}}
\newcommand{\deghostifyInterleaving}{\ensuremath{\rho}}
\newcommand{\ghostifyUpdate}{\ensuremath{\delta_W}}
\newcommand{\ghostifyInvariant}{\ensuremath{\Psi_W}}
\newcommand{\GlobalDeclarations}{\ensuremath{\mathcal{D}}}
\newcommand{\GhostGlobalDeclarations}{\ensuremath{\mathcal{D}}_{\ghost}}
\newcommand{\GhostUpdate}{\ensuremath{\mathbb{U}}}
\newcommand{\GhostInvariant}{\ensuremath{\mathbb{I}}}
\newcommand{\Witness}{\ensuremath{(\GhostGlobalDeclarations, \GhostLocalVariables, \GhostUpdate, \GhostInvariant)}}

\newcommand{\semI}[1]{\llbracket#1\rrbracket_{\InterleavingExecutions}}
\newcommand{\semLT}[1]{\llbracket#1\rrbracket_{\LocalTraces}}
\newcommand{\semGT}[1]{\llbracket#1\rrbracket_{\GlobalTraces}}
\newcommand{\threadsI}{\textsf{threads}_\InterleavingExecutions}
\newcommand{\threadsGT}{\textsf{threads}_\GlobalTraces}

\newcommand{\code}[1]{\texttt{#1}}
\newcommand{\tool}[1]{\textsc{#1}}
\newcommand{\main}{\code{main}}

\newcommand{\Ultimate}{\tool{Ultimate}\xspace}
\newcommand{\UltimateGemCutter}{\tool{Ultimate GemCutter}\xspace}
\newcommand{\GemCutter}{\tool{GemCutter}\xspace}
\newcommand{\Goblint}{\tool{Goblint}\xspace}

\SetupFloatingEnvironment{listing}{name=List.,within=none}
\crefname{listing}{Listing}{Listings}
\Crefname{listing}{Listing}{Listings}
\makeatletter
\@ifundefined{c@sublisting}{\newsubfloat[captionskip=10pt]{listing}}{}
\makeatother
\makeatletter
  \newbox\sf@box
  \newenvironment{SubFloat}[2][]%
    {\def\sf@one{#1}%
     \def\sf@two{#2}%
     \setbox\sf@box\hbox
       \bgroup}%
    {  \egroup
     \ifx\@empty\sf@two\@empty\relax
       \def\sf@two{\@empty}
     \fi
     \ifx\@empty\sf@one\@empty\relax
       \subfloat[\sf@two]{\box\sf@box}%
     \else
       \subfloat[\sf@one][\sf@two]{\box\sf@box}%
     \fi}
\makeatother
\definecolor{codehighlight}{RGB}{230,250,250}
\setminted{frame=single, framerule=0.5pt, framesep=5pt, numbersep=0.5em, xleftmargin=1.25em,
           tabsize=2, mathescape=true, escapeinside=||, highlightcolor=codehighlight, numberblanklines=false,
           baselinestretch=1, fontsize=\footnotesize}
\usemintedstyle{tango}
\definecolor{operator}{RGB}{164, 0, 0}

\definecolor{atomic}{RGB}{100, 102, 98}

\definecolor{colormain}{HTML}{204a87}
\definecolor{colort1}{RGB}{164, 0, 0}
\newcommand{\ltDrawingDefs}{
    \usetikzlibrary{calc}
    \usetikzlibrary{fit, shapes.geometric, arrows, positioning}
    \tikzset{
        every node/.style={node distance=50pt and 45pt},
        programpoint/.style={circle,draw,font=\small},
        programpointsink/.style={programpoint,line width=0.5mm},
        programpointwide/.style={programpoint,node distance=40pt and 55pt},
        programpointwidesink/.style={programpointwide,line width=0.5mm},
        edgelabel/.style={midway,font=\small,above=0.5mm}
    }
    \newcommand{\programo}[3]{
        \draw[->](##1)--(##3)node[edgelabel]{$##2$};
    }
    \newcommand{\programon}[4][]{
        \node[programpoint##1,right= of ##2](##4){};
        \draw[->](##2)--(##4)node[edgelabel]{$##3$};
    }
    \newcommand{\programondo}[4][]{
      \coordinate[right= of ##2, xshift=-20pt](##4i){};
      \draw[-,dotted](##2)--(##4i)node[]{$##3$};
      \node[programpoint##1,right= of ##4i, xshift=-30pt](##4){};
      \draw[->](##4i)--(##4)node[edgelabel]{$##3$};
    }
    \newcommand{\programond}[4][]{
        \node[programpoint##1,right= of ##2,dashed](##4){};
        \draw[->,dashed](##2)--(##4)node[edgelabel]{$##3$};
    }
    \newcommand{\programont}[5][]{
        \node[programpoint##1,right= of ##2, label distance=0mm, label={-176:${##4}$}](##5){};
        \draw[->](##2)--(##5)node[edgelabel]{$##3$};
    }
    \newcommand{\createo}[3][]{
        \draw[-latex,blue](##2)--(##3)node[edgelabel,##1]{$\to_c$};
    }
    \newcommand{\mutexo}[5][]{
        \draw[-latex,red,##5](##2) to node[edgelabel,##1]{$\to_{##4}$} (##3);
    }
}


\newcommand{\semarrow}[2]{
\lhook\joinrel\xrightarrow{#2}_{#1}
}
\newcommand{\semarrowLT}[2]{
\joinrel\xrightarrow{#2}_{#1}
}

\newcommand{\splitAtomic}{\textsf{split}}
\newcommand{\extend}{\textsf{extend}}
\newcommand{\projectSplitAtomic}{\ensuremath{\pi_{\splitAtomic}}}
\newcommand{\projectGhostify}{\ensuremath{\pi_{\ghost}}}

\newcommand{\partialto}{\rightharpoonup}
\newcommand{\undefined}{\uparrow}
\newcommand{\naturalnumbers}{\mathbb{N}}
\newcommand{\self}{\texttt{self}}

\newcommand{\ttbox}[1]{\colorbox{black!8}{\footnotesize\texttt{#1}\tiny\strut}}
\setlength\fboxsep{1.3pt}
\newlength\multilen

\makeatletter
  \def\myrulefill{\leavevmode\leaders\hrule height .7ex width 1ex depth -0.6ex\hfill\kern\z@}
\makeatother
\newcommand{\separator}[1]{\multicolumn{3}{c}{\hrulefill\quad #1 \quad\hrulefill\hrulefill}}

\usepackage{xintfrac}
\makeatletter
\newcommand\Duration [1]{
  \xintifLt{#1}{60}
  {
      \xintRound {1}{#1}
  }
  {
    \expandafter\Duration@a
    \romannumeral0\xintiidivision{\xintiRound{0}{#1}}{60}%
  }%
}%

\newcommand\Duration@a [2]{
   \expandafter\Duration@b
   \romannumeral0\xintiidivision{#1}{60}:\Duration@format{#2}%
}%

\newcommand\Duration@b [2]{
%
   \xintiiifZero {#1}
      {#2}
      {#1:\Duration@format {#2}}
}%

\newcommand\Duration@format [1]{\expandafter\@gobble\the\numexpr 100+#1\relax}

\makeatother

\newcommand{\hurl}[1]{{\scriptsize\UrlFont\href{https://#1}{\path{#1}}}}

\makeatletter
\newcommand{\mylabel}[1]{%
    \@ifundefined{c@#1}{%
        \newcounter{#1}%
        \setcounter{#1}{0}%
    }{}%
    \ifthenelse{\value{#1} > 0}{}{%
        \label{#1}%
        \addtocounter{#1}{1}%
    }%
}
\makeatother

\ifextendedversion
\else
\usepackage[firstpage]{draftwatermark}
\SetWatermarkAngle{0}
\SetWatermarkText{\hspace*{5.5in}\raisebox{8.8in}{\href{https://doi.org/10.5281/zenodo.13863579}{\includegraphics{badge-afr}}}}
\fi
\begin{document}
\title{Correctness Witnesses for Concurrent Programs:\\
Bridging the Semantic Divide with Ghosts \ifextendedversion(Extended Version)\fi}
\titlerunning{Correctness Witnesses for Concurrent Programs}
%
%
\author{Julian~Erhard \inst{1,5}\textsuperscript{(\Letter)}
	\and Manuel~Bentele \inst{2,6}
	\and Matthias~Heizmann \inst{4}
	\and Dominik~Klumpp \inst{2}
	\and Simmo~Saan \inst{3}
	\and Frank~Schüssele \inst{2}
	\and Michael~Schwarz \inst{1}
	\and Helmut~Seidl \inst{1}
	\and Sarah~Tilscher \inst{1,5}
	\and Vesal~Vojdani \inst{3}
}

\institute{
    Technical University of Munich, Garching, Germany \and
    University of Freiburg, Freiburg, Germany \\
	\and
    University of Tartu, Tartu, Estonia \and
    University of Stuttgart, Stuttgart, Germany \and
    Ludwig-Maximilians-Universität München, Munich, Germany \and
    Hahn-Schickard, Villingen-Schwenningen, Germany \\
	\email{julian.erhard@tum.de}
}

\authorrunning{J. Erhard et al.}
%
%
\maketitle
\begin{abstract}
	Static analyzers are typically complex tools and thus prone to contain bugs themselves.
	To increase the trust in the verdict of such tools, \emph{witnesses} encode key reasoning steps underlying the verdict in an exchangeable format, enabling independent validation of the reasoning by other tools.
	For the correctness of concurrent programs, no agreed-upon witness format exists --- in no small part due to the divide between the semantics considered by analyzers, ranging from interleaving to thread-modular approaches,
	making it challenging to exchange information.
	We propose a format that leverages the well-known notion of \emph{ghosts} to embed the claims a tool makes about a program into a modified program with ghosts, such that the validity of a witness
	can be decided by analyzing this program.
	Thus, the validity of witnesses with respect to the interleaving and the thread-modular semantics coincides.
	Further, thread-modular invariants computed by an abstract interpreter can naturally be expressed in the new format using ghost statements.
	We evaluate the approach by generating such ghost witnesses for
	a subset of concurrent programs from the SV-COMP benchmark suite,
	and pass them to a model checker.
	It can confirm \qty{75}{\percent} of these witnesses ---
	indicating that ghost witnesses can bridge the semantic divide between interleaving and thread-modular approaches.
\keywords{Software Verification \and Correctness Witnesses \and Concurrency \and Ghost Variables \and Abstract Interpretation \and Model Checking.}
\end{abstract}

\section{Introduction}

While static analysis tools can help developers write bug-free programs, these tools sometimes return incorrect verdicts due to bugs in the tools themselves.
To increase the trust in the verdicts of static analysis tools,
\emph{witnesses} have been proposed as artifacts that contain further information about static analysis results~\cite{Beyer2015,Beyer2016}.
For indicating that a program property does not hold, a witness may exhibit a program execution that violates the property (\emph{violation witness}).
For substantiating that a property holds throughout all possible executions, a suitable set of program invariants may be provided
that guides the static analysis tool towards proving the property of interest (\emph{correctness witness}) --- and may expose errors in reasoning when invariants can be shown to be violated.
\emph{Validators} are static analysis tools that consume witnesses and report whether they can re-establish verdicts.
They are, e.g., used in SV-COMP \cite{Beyer2023}.
Here, we are interested in a witness format suitable for certifying the correctness of concurrent programs.
We extend the format for correctness witnesses by \citet{Ayaziova2024}, which allows expressing invariants per location, with \emph{ghost variables}.
Ghost variables have not only been proposed for the verification of sequential programs \cite{Schreiber97} but have also been employed for concurrent programs \cite{OwickiG76,Keller76,AptBO09}.
Ghost variables --- sometimes referred to as auxiliary variables --- are additional program variables introduced to ease the specification and verification of intricate program properties.
These variables allow encoding the progress of other threads, making it possible to relate observations that the current thread may make to this progress.
A witness then specifies a set of ghost variables and how they evolve via \emph{ghost updates} inserted at existing program locations.
Additionally, it contains \emph{invariants}, which can refer both to program and ghost variables and, thus, to properties that may be difficult to express without ghosts.
\begin{listing}[h]
  \centering
  \begin{SubFloat}{\label{listing:running-example-original}Concurrent program.}
    \begin{minipage}[t]{0.3\linewidth}
      \begin{minted}[frame=none, linenos]{c}
unsigned int used = 0;

|$main$|:
  |\codecreate{t_1}|;

  |\codelock{m}|;
    |\codeassert{used == 0}|;
  |\codeunlock{m}|;

|$t_1$|:
  |\codelock{m}|;
    used = 47;
    used = 0;
  |\codeunlock{m}|;
      \end{minted}
    \end{minipage}
  \end{SubFloat}
  \hfill
  \begin{SubFloat}{\label{listing:running-example-annotated}Concurrent program with ghosts.}
    \begin{minipage}[t]{0.61\linewidth}
      \setlength\fboxsep{1.25pt}
      \let\oldFancyVerbLine\theFancyVerbLine
      \renewcommand{\theFancyVerbLine}{
        \ifnum\value{FancyVerbLine}=0
        \else
          \oldFancyVerbLine
        \fi
      }
      \begin{minted}[frame=none, linenos, breaklines]{c}
unsigned int used = 0;

|$main$|:
  |\codecreate{t_1}|;
  |\codehighlight{\codeatomicbegin \codeassert{$\ghost$ == 0 $\Longrightarrow$ used == 0}; \codeatomicend}\setcounter{FancyVerbLine}{0}|
  |\codehighlight{\codeatomicbegin}| |\codelock{m}|; |\codehighlight{$\ghost$ = 1; \codeatomicend}\setcounter{FancyVerbLine}{4}|
    |\codeassert{used == 0}|;
  |\codehighlight{\codeatomicbegin}| |\codeunlock{m}|; |\codehighlight{$\ghost$ = 0; \codeatomicend}|

|$t_1$|:
  |\codehighlight{\codeatomicbegin}| |\codelock{m}|; |\codehighlight{$\ghost$ = 1; \codeatomicend}|
    used = 47;
    used = 0;
  |\codehighlight{\codeatomicbegin}| |\codeunlock{m}|;| \codehighlight{$\ghost$ = 0; \codeatomicend}|
      \end{minted}
    \end{minipage}
  \end{SubFloat}
  \caption{Example program without \protect\subref{listing:running-example-original} and with ghost statements \protect\subref{listing:running-example-annotated}.
  Parts highlighted in blue show ghost statements from a witness.
  }\label{listing:running-example}
\end{listing}

\begin{example}\label{e:running-example}
	The program from Listing \ref{listing:running-example-original} is an example of a \emph{resource invariant}:
	the variable $\mathtt{used}$ is always $0$ except when thread $t_1$ holds the mutex \code{m}.
	We cannot state this property with C assertions.
	Instead, in Listing~\ref{listing:running-example-annotated} we add a ghost variable $\ghost$ and maintain it such that it indicates whether the mutex \code{m} is currently held by any thread.
	The assertion in $main$ states that, when no thread is in a critical section, the value of the global variable $\mathtt{used}$ is $0$.
\end{example}
A core requirement for a witness format is that it facilitates the exchange of information between tools.
Different tools employ different semantics to formalize the behavior of (sequentially consistent) concurrent programs.
Some may use an interleaving semantics \cite{Lamport79}, while others turn to a thread-modular semantics \cite{Lamport78,Mukherjee2017,Schwarz2021}.
\ After showing that safety of programs coincides for the interleaving and a thread-modular semantics (\cref{sec:programs-and-semantics}),
we introduce a witness format that allows instrumenting a program with ghost statements (\cref{sec:ghost-witnesses}) that are executed atomically with existing statements.
We show how, by this construction, the validity of witnesses with respect to the interleaving and the thread-modular semantics also coincides (\cref{sec:valid-witness-local-traces}).
We exemplify how ghost witnesses are naturally suited to express information obtained for concurrent programs by a thread-modular\ abstract interpreter (\cref{sec:generation}).
Our format for ghost witness (\cref{sec:format}) extends the existing \textsc{SV-COMP} witness format, easing adoption by other software verifiers.
For the experimental evaluation, we automatically generate witnesses using \Goblint~\cite{Vojdani2016,Saan2023}, and validate them using the model checker \UltimateGemCutter~\cite{GemCutterSVCOMP,GemCutterPLDI} (\cref{sec:validation-with-model-checking}).

\section{Two Views on Concurrent Programs}\label{sec:programs-and-semantics}
We first introduce the notion of programs that we will consider and present an interleaving and a thread-modular semantics for those programs.
To simplify the presentation, we model a core subset of the C~language, which is later extended by allowing the insertion of ghost statements.
The language supports dynamic thread creation, locking and unlocking of mutexes, and reading and writing of global variables.
Function calls, pointers and heap memory are skipped to not overcomplicate the exposition.

\subsection{Programs}\label{subsec:programs}
A program is given by a set of global variables $\Globals$, a set of mutexes $\Mutexes$, a set of local states $\LocalStates$ and a finite number of named control-flow graphs $\Threadprototypes$, called \emph{thread templates}, one of which is \main.
We demand that $\Globals \cap \Mutexes = \emptyset$, and further, that there is a set of global declarations $\GlobalDeclarations$ that provides types and initial values for all global variables.
Local states are a type-correct mapping from local variables $\LocalVariables$ to values $\Values$.
For simplicity, we assume that all threads use the same set of local variables, and we omit procedures.
Each control-flow graph consists of a finite set of nodes $\Nodes$ and labeled edges \Edges.
The sets of program points are disjoint between the different control-flow graphs.
An edge $(u, a, v) \in \Edges$ consists of a source node $u$, an action $a$, and a sink node $v$.
We demand that for a given $u, v$, there may be at most one $a$, such that $(u, a, v) \in \Edges$.
Each control-flow graph has a dedicated initial node with no incoming edges, at which execution is meant to start.
The following statements are supported:

\begin{description}
	\item[Locking/Unlocking mutexes.] The actions \code{lock(m)} and \code{unlock(m)}, where \code{m} is some mutex.
	\item[Thread creation.] The action \code{create(t)}, where \code{t} is the name of some thread template.
	\item[Local Update.] The action \code{l := f l}, where \code{f} is a pure function taking the local state $l$, yielding the new local state of the thread.
	\item[Global Read.] The action \code{l := f l g},  where \code{f} is a  pure function taking the local state $l$ and the value of some global variable \code{g}, yielding the new local state of the thread.
	\item[Global Write.] The action \code{g := f l}, where \code{f} is some pure function taking the local state $l$, updating the global variable.
	\item[Assertion.] The action \code{assert(p l)}, where \code{p} is some pure function mapping the local state $l$ to a boolean value.
	\item[Guard.] The actions \code{Pos(c l)} and \code{Neg(c l)}, realizing branching on some condition $c$ over the local state $l$, where $c$ is a pure function yielding a boolean.
\end{description}
We refer to the language of programs with these actions as $\Language$.

\subsection{Interleaving Semantics}
\label{sec:interleaving-semantics}
The interleaving semantics is a concrete semantics of concurrent programs where all actions in an execution are totally ordered.
To distinguish threads,
they are identified with a thread id computed from their creation history.
The set of all possible thread ids for a program, in the following often simply called threads, is referred to as $\Threads$.
The thread id of the initial thread is the empty sequence, while for a created thread, the thread id is obtained by concatenating the thread id of the parent thread with the number of threads the parent thread has created before.
A program configuration (or state) $(L, M, G) \in \States: (\Threads \partialto (\Nodes \times \LocalStates)) \times (\Mutexes \partialto \Threads) \times (\Globals \rightarrow \Values)$ is a triple, where $L$ is a partial mapping of threads $\Threads$ to program nodes $\Nodes$ and local states $\LocalStates$, $M$ is a partial mapping from mutexes to the threads that hold them,
and $G$ is a mapping of global variables to values.
Each thread has a local variable $\self$ that contains its thread id that is only set when the thread is created.
We write $M\,m\undefined$, if the value of $M$ is not defined for $m$, i.e., the mutex $m$ is not held by any thread in $M$.
An execution of a program is given by a sequence of program configurations $(s_i)_{1\leq i \leq k}$, with $s_i \in \States$, for some natural number $k \geq 1$, interleaved with a sequence $((e_i, t_i))_{1 \leq i \leq k -1}$, with $e_i \in \Edges, t_i \in \Threads$, of edges annotated with the thread taking them.
For notational convenience, let us denote an execution step via an edge $(u, a, v)$ from state $s_i$ to state $s_{i+1}$ taken be thread $t$, as follows:
\[
 s_i \semarrow{t}{(u, a, v)} s_{i+1}
\]
We demand that the interleavings are \emph{consistent}.
The set of consistent interleavings is defined inductively:
\begin{enumerate}[label={I\arabic*}, ref={I\arabic*}]
  \item \label{interleaing:initial} The sequence consisting of some initial state
  $
  s_0 = (\{t_0 \mapsto (\st_{\main}, l_0, 0)\},\emptyset,g_0)
  $
  where $l_0$ is the initial local state, i.e., maps $\self$ to the initial thread id $t_0$ and all other local variables to some initial value $0$, and $g_0$ maps all globals to their initial value according to the global declarations $\GlobalDeclarations$, is consistent.
 \item  \label{interleaving:extending}
 Let $i$ be a consistent interleaving ending in state $s = (L,M,G)$. Then $i$ can be extended to a consistent interleaving with $s \semarrow{t}{(u, a, v)} s' = (L',M',G')$ if
 \begin{enumerate}
  \item the edge $(u,a,v)$ is in the control-flow graph for the prototype of $t$
  \item $t$ is at program node $u$ in $s$
  \item $a$ is an \emph{admissible} action for thread $t$, i.e.,
  \begin{itemize}[label={--}]
    \item if $a \equiv \code{unlock(m)}$ it must hold that $M\; \code{m} = t$,
    \item if $a \equiv \code{lock(m)}$ it must hold that $M\; \code{m} \undefined$,
    \item if $a \equiv \code{Pos(c l)}$ or $a \equiv \code{Neg(c l)}$, \code{c l} must evaluate to true, and false respectively, on the current local state \code{l} for $t$,
   \end{itemize}
  \item $s'$ \emph{reflects} the effect of action $a$ of thread $t$ (for a detailed description see
  \ifextendedversion \cref{sec:cons_details}\else the extended version of the paper\cite{arxiv-extended}\fi)
  and stores the new node $v$ for $t$.
 \end{enumerate}

\begin{toappendix}
\subsection{Details on Consistency of Interleavings}
\label{sec:cons_details}
Let $\newtid: \Threads \rightarrow \InterleavingExecutions \rightarrow \Threads$ be a function that computes the thread id for a new thread, given its parent thread id and the interleaving up to the last state before creation of the new thread.
When appending a step $\semarrow{(u,a,v)}{t} s'$ reaching state $s'$ to a consistent interleaving $i$, the resulting interleaving $i'$ is consistent only if state $s'$ \emph{reflects} the effect of the action $a$ of thread $t$. This is the case if $s' = (L', M', G')$ is as follows:

\[
  L' =
  \begin{cases}
    L \oplus \{t \mapsto (v,l)\} \oplus \{tid_c \mapsto (st_c,l_0)\} & \text{if}\ a \equiv \code{create($t_c$)}\\
    L \oplus \{t \mapsto (v,f\; l)\} & \text{if}\ a \equiv l := f\; l\\
    L \oplus \{t \mapsto (v,f\; l\; (G\; g))\} & \text{if}\ a \equiv l := f\; l\; g\\
    L \oplus \{t \mapsto (v,l)\} & \text{otherwise}\\
  \end{cases}
\]
assuming $L\; t = (u,l)$, $tid_c = \newtid(l\; \self, i)$, $st_c$ is the start node of thread template $t_c$  and $l_0 = l \oplus \{\self \mapsto tid_c \}$, and
\[
  G' =
  \begin{cases}
    G \oplus \{g \mapsto f\; l\} & \text{if}\ a \equiv g := f\; l\\
    G & \text{otherwise}\\
  \end{cases}
\]
\[
  M' =
  \begin{cases}
    M \oplus \{m \mapsto t\} & \text{if}\ a \equiv \code{lock(m)}\\
    M \oplus \{m := None\} & \text{if}\ a \equiv \code{unlock(m)}\\
    M & \text{otherwise}\\
  \end{cases}
\]

\end{toappendix}

\end{enumerate}
The set of all such consistent sequences of a program $P$ forms the set of interleaving executions $\semI{P}$. We denote the set of all consistent interleavings by $\InterleavingExecutions$.
An interleaving of Listing \ref{listing:running-example-original} is as follows:
  \begin{center}
  \ltDrawingDefs
  \scalebox{0.65}{
  \begin{tikzpicture}
    \node[programpointwide](tmpp0){};

    \programont{tmpp0}{\create{$t_1$}}{\textcolor{colormain}{main}}{tmpp1}
    \programont{tmpp1}{\lock{m}}{\textcolor{colort1}{t_1}}{tmpp2}
    \programont{tmpp2}{\code{used = 47}}{\textcolor{colort1}{t_1}}{tmpp3}
    \programont{tmpp3}{\code{used = 0}}{\textcolor{colort1}{t_1}}{tmpp4}
    \programont{tmpp4}{\unlock{m}}{\textcolor{colort1}{t_1}}{tmpp6}
    \programont{tmpp6}{\lock{m}}{\textcolor{colormain}{main}}{tmpp7}
    \programont[wide]{tmpp7}{\code{tmp = used}}{\textcolor{colormain}{main}}{tmpp8}
    \programont[wide]{tmpp8}{\assert{\dots}}{\textcolor{colormain}{main}}{tmpp9}
  \end{tikzpicture}
  }
  \end{center}

Here, the read of the global variable is extracted from the assertion, to follow the restrictions of $\Language$.
We say that an assertion is \emph{violated in an interleaving} if and only if it evaluates to false in the interleaving.
We say a program is \emph{safe w.r.t. the interleaving semantics} if and only if there are no interleavings of the program in which any of its assertions is violated.

\subsection{Thread-Modular Semantics}\label{subsec:thread-modular-semantics}
The interleaving semantics assumes the existence of a global observer that can order all events totally.
In contrast to that, the local and global trace semantics~\cite{Schwarz2021, SchwarzSSEV23} order the events of a program execution in a \emph{partial order} of local configurations of threads.
A local configuration $(n, u, l)$ consists of a local state $l$, a number $n \in \naturalnumbers$ of steps performed by the thread, and the program node $u$ the thread is at.
The local state contains the values of local variables, including the id of the thread in the variable $\self$.
This semantics requires programs from a language $\LanguageWithMg$, where there is a dedicated mutex for each global, $\Mutexes_{\Globals} = \{ m_g \mid g \in \Globals\} \subseteq \Mutexes$, and at each access to a global variable $g$ by a thread $t$, the thread holds the mutex $m_g \in \MutexesForGlobals$.
This can be achieved by e.g. inserting actions \code{lock($m_g$)} and \code{unlock($m_g$)} before and after every access to $g$.
In the partial order of a global trace, events within the same thread are ordered according to the program order.
Additionally, there exist special sets of \textit{observable} events and \textit{observer} events, which contain those events where information from one thread is published to be observed by another thread.
In our setting, \code{unlock($m$)} is an observable event with the corresponding observer being \code{lock($m$)}.
A global trace orders every observing event after the event it observes.
We refer to the reflexive transitive closure of the union of the program order, the synchronizing actions, and the create order as the \emph{causality order}.
A \emph{global trace} may contain multiple maximal elements.
We call global traces that have a unique maximal element \emph{local traces}.
The thread to which this maximal element belongs, is called the \emph{ego-thread}.
For a given ego-thread reaching some configuration, the local trace records all other local configurations that may have transitively influenced the ego-thread.

Values of global variables are not contained in a local program state; 
instead, the value of a global $g$ is determined by looking at the last write that was performed to $g$ in the local trace.
Global traces can be viewed as acyclic graphs;
for the program in \cref{listing:running-example}, the following shows an example local trace,
where mutex $m_{used}$ is abbreviated with $m_u$:
\begin{figure}
\vspace{-3em}
\ltDrawingDefs
\centering
\resizebox{0.95\linewidth}{!}{
	\begin{tikzpicture}
	\node[programpointwide](tmpp0){};

	\programon{tmpp0}{\create{$t_1$}}{tmpp1}
	\programon{tmpp1}{\lock{m}}{tmpp2}
	\programon{tmpp2}{\lock{$m_{u}$}}{tmpp3}
	\programon[wide]{tmpp3}{\code{tmp = used}}{tmpp4}
	\programon[wide]{tmpp4}{\unlock{$m_{u}$}}{tmpp5}
	\programon[wide]{tmpp5}{\assert{\dots}}{tmpp6}

	\node[programpoint,below right=of tmpp1](t1pp0){};

	\programon{t1pp0}{\lock{m}}{t1pp1}
	\programon{t1pp1}{\lock{$m_{u}$}}{t1pp2}
	\programon[wide]{t1pp2}{used = 47}{t1pp3}
	\programon[wide]{t1pp3}{used = 0}{t1pp14}


	\programon[wide]{t1pp14}{\unlock{$m_{u}$}}{t1pp15}
	\programon{t1pp15}{\unlock{m}}{t1pp16}

	\createo{tmpp0}{t1pp0}

	\mutexo[above=0mm]{tmpp0}{t1pp2}{m_{u}}{out=-70,in=220}
	\mutexo[above=0mm, pos=0.7]{t1pp15}{tmpp3}{m_{u}}{out=130,in=330}

	\mutexo{tmpp0}{t1pp1}{m}{out=-50,in=200}
	\mutexo[pos=0.8]{t1pp16}{tmpp2}{m}{out=130,in=320}
\end{tikzpicture}
}
\vspace*{-3em}
\end{figure}

Following the formal description by \citet{Schwarz2021}, requirements for a consistent global trace in particular are:
\begin{description}
  \item[Causality Order] The partial order has a unique least element.
  \item[Create Order] Each thread except the initial thread is created by exactly one create action, and each create action creates at most one thread.
  \item[Lock Order] For a given mutex $m$, each lock operation is preceded by exactly one unlock operation, or it is the first lock operation of the mutex $m$.
  Each unlock operation for $m$ is followed by at most one lock operation of $m$.
  \item[Reads of Globals] The value read from a global variable must agree with the last write performed to that variable in the global trace or the initial value.
\end{description}

\citet{Schwarz2021} give a fixpoint formulation of the set of local traces of a program and thread-modular analyses that compute abstractions of these.
We denote the set of global traces and the set of local traces of a program $P \in \LanguageWithMg$ by $\semGT{P}$ and $\semLT{P}$, respectively.
We say an assertion is \emph{violated in a local trace} if it evaluates to false in that local trace.
We say a program is \emph{safe w.r.t. the local trace semantics} if and only if there are no local traces of the program that violate any of its assertions.


\subsection{Equivalence of Interleaving Semantics and Local Trace Semantics w.r.t. Safety}
\label{subsec:equivalence-interleaving-local-traces}
We show that the interleaving and the local trace semantics agree on which programs are safe.
This later allows to show that, when a ghost witness is encoded as a program, these semantics agree on the validity of witnesses.
To formalize the correspondence between interleavings and global traces, we introduce some definitions.
The function $\threadsI$ returns the set of threads appearing in the $L$ component of the last state of an interleaving.
Similarly, we define $\threadsGT$ that returns the set of threads whose local configurations appear in a global trace.
\begin{definition}[Coincidence]\label{def:coincidence}
  Consider a program $P$ from $\LanguageWithMg$, one of its global traces $\globtrace \in \semGT{P}$ and one of its interleavings $\interleaving \in \semI{P}$.
  Let $t \in \threadsI(\interleaving) \cap \threadsGT(\globtrace)$.
  Let us consider the sequence $A_\globtrace$ of local configurations and steps of the thread $t$ in $\globtrace$ and denote it by:
  \[\sigma_0 \semarrowLT{t}{(u_0, a_0, v_0)} \sigma_1 \dots \sigma_{k-1} \semarrowLT{t}{(u_{k-1}, a_{k-1}, v_{k-1})} \sigma_k\]
  Similarly, consider the subsequence $A_{\interleaving}$ of $\interleaving$ for steps taken by $t$ and their start and target states:
  \[
    s_0 \semarrow{t}{(u'_0, a'_0, v'_0)} s_1~~s'_1 \semarrow{t}{(u'_1, a'_1, v'_1)} s_2 \dots \semarrow{t}{(u'_{l-1}, a'_{l-1}, v'_{l-1})} s_l
  \]
  By construction, for any two consecutive states $s_j, s'_j$ appearing in $A_{\interleaving}$, only threads different from $t$ may have taken steps, and thus $L_{j}(t) = L'_{j}(t)$, where $L_j$ and $L'_j$ are the first components of $s_j$ and $s'_j$, respectively.
  By projecting the states to the local variables of $t$ and its current location, adding the number of steps performed by $t$, and fusing the consecutive states with no steps in between, we obtain a sequence $A'_{\interleaving}$ of the same type as $A_\globtrace$.
  We say that the interleaving $\interleaving$ and the global trace $\globtrace$ \emph{coincide w.r.t the thread $t$}, in case that $A_\globtrace$ is equal to $A'_{\interleaving}$.
  If the same threads appear in $\interleaving$ and $\globtrace$ and they coincide w.r.t to all of these, we say they \emph{coincide}.
\end{definition}

\begin{definition}[Create-complete Global Trace]
  We say that a global trace $\globtrace$ is \emph{create-complete}, if for each created thread appearing in $\threadsGT\,{\globtrace}$, the  $\code{create}$ action that created the thread is part of $\globtrace$ and for each $\code{create}$ action, a configuration of the created thread appears.
\end{definition}

\begin{lemma}\mylabel{lemma:interleaving-global-trace-equiv}
	Let $\InterleavingExecutions = \semI{P_1}$ be the set of all interleavings for some program $P_1 \in \LanguageWithMg$,
	and let $\GlobalTraces = \semGT{P_1}$.
	\begin{enumerate}
	  \item\label{interleaving-to-globtrace}
	  For any interleaving $\interleaving \in \InterleavingExecutions$, there exists a create-complete global trace ${\globtrace \in \GlobalTraces}$ such that $\interleaving$ and $\globtrace$ coincide.
	  \item\label{globtrace-to-interleaving} For any create-complete global trace $\globtrace \in \GlobalTraces$, there exists an interleaving ${\interleaving \in \InterleavingExecutions}$ such that $\interleaving$ and $\globtrace$ coincide.
	\end{enumerate}
  \end{lemma}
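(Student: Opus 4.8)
The plan is to prove both directions by exploiting the fact that an interleaving $\interleaving$ carries exactly the same events as a global trace $\globtrace$, only totally ordered rather than partially ordered: part~\ref{interleaving-to-globtrace} forgets ordering information to pass from a total to a partial order, and part~\ref{globtrace-to-interleaving} recovers a compatible total order by linearizing the causality order. In both cases the events are in bijection with the individual steps, so coincidence holds essentially by construction, and the real work is to check that the four consistency conditions for global traces on one side match up with the admissibility and effect-reflection conditions for interleavings on the other.

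For part~\ref{interleaving-to-globtrace}, given $\interleaving$ I would introduce one event per step $s_i \semarrow{t}{(u,a,v)} s_{i+1}$ and equip these events with the causality order induced by $\interleaving$: program order within each thread (inherited from the order of that thread's steps in $\interleaving$), a create edge from each \code{create} step to the first step of the spawned thread, and a synchronization edge from each \code{unlock(m)} to the \code{lock(m)} that immediately follows it on $m$ in $\interleaving$. Labelling each of $t$'s events with the local configuration obtained by projecting the states of $\interleaving$ to $t$ (as in \cref{def:coincidence}) makes coincidence immediate. It then remains to verify the global-trace conditions: the initial state of $\interleaving$ supplies the unique least element; consistency of $\interleaving$ together with the \newtid-based thread-id scheme yields the Create Order, and including every \code{create} step and every spawned thread makes the result create-complete; finally the admissibility conditions on \code{lock}/\code{unlock} (that $M\,m\undefined$ resp. $M\,m = t$) force locks and unlocks on each $m$ to alternate along $\interleaving$, which is exactly the Lock Order.

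For part~\ref{globtrace-to-interleaving}, given a create-complete $\globtrace$ I would take any linear extension of its causality order and replay the steps from the initial state, applying the effect of each action in turn. Coincidence again holds by construction, and consistency of the resulting sequence reduces to checking admissibility at each step. The key observation is that for every mutex $m$ the events $\code{lock}_1(m) < \code{unlock}_1(m) < \code{lock}_2(m) < \dots$ already form a chain in the causality order --- each $\code{lock}_i$ and its matching $\code{unlock}_i$ lie in the same thread (only the holder may unlock) and are hence program-ordered, while the Lock Order's synchronization edge puts $\code{unlock}_{i-1}(m)$ before $\code{lock}_i(m)$ --- so every linear extension respects this chain and the mutex is genuinely free at each \code{lock} and held by the right thread at each \code{unlock}. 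Guards and local updates evaluate correctly because program order inside each thread is preserved and the recorded local configurations of $\globtrace$ are consistent; the only bookkeeping point is that the \code{create} actions of a given thread occur in program order, so \newtid\ recomputes precisely the thread ids stored in \self\ in $\globtrace$.

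I expect the main obstacle to be the Reads of Globals condition, which must be reconciled in both directions: I have to argue that the value a thread reads for a global $g$ coincides with the value of the last write to $g$ in the partial order. This is where the restriction to $\LanguageWithMg$ is essential --- since every access to $g$ is guarded by the dedicated mutex $m_g \in \MutexesForGlobals$, the chain argument above applied to $m_g$ totally orders all accesses to $g$, and this order is embedded in the causality order. Consequently ``the last write to $g$'' is well defined in $\globtrace$ and agrees with the value held in the $G$-component of the interleaving whenever a read occurs, so the Reads-of-Globals condition and the effect-reflection condition for global reads become two descriptions of the same value.
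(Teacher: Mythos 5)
Your proposal is correct and follows essentially the same route as the paper's proof: for part~\ref{interleaving-to-globtrace} you extract per-thread step sequences from the interleaving and impose create and unlock-to-lock orderings to assemble the partial order, and for part~\ref{globtrace-to-interleaving} you take a linear extension of the (create-extended) causality order and replay the actions from the initial state. If anything, your write-up is more thorough than the paper's, which states the constructions and claims coincidence by construction, whereas you explicitly verify admissibility via the per-mutex chain argument and reconcile the Reads-of-Globals condition using the dedicated mutexes $m_g$ of $\LanguageWithMg$ --- details the paper leaves implicit.
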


\begin{proof}
  A proof can be found in \ifextendedversion \cref{sec:coincidence-proof}\else the extended version of the paper\cite{arxiv-extended}\fi.
\end{proof}

\begin{toappendix}
\subsection{Proof of \cref{lemma:interleaving-global-trace-equiv}}
\label{sec:coincidence-proof}

\begin{proof}
	\begin{enumerate}
	  \item Fix an interleaving $i \in \InterleavingExecutions$ and a thread $t \in \threadsI(i)$. We obtain the global trace $\globtrace$ as follows:
	  \begin{enumerate}
		\item \label{split} For every thread $t_i \in \threadsI(i)$, including $t$, we extract the subsequence $A_{i}^{t_i}$ of $i$ that consists of steps taken by $t_i$ and their start and target states.
		As in the construction in \cref{def:coincidence}, we fuse, for every two consecutive steps in $A_{i}^{t_i}$, the target state $s_{t_i,k}$ of the first step with the start state $s'_{t_i,k}$ of the following step by mapping them to their first component $L_k$ and $L'_k$ each applied to $t_i$ which yields the coinciding reached program node and local state of $t_i$.
		Further, we add the number of steps performed by $t_i$ when reaching the state $s'_{t_i,k}$.
		This way, we obtain a sequence ${A^{t_i}_{i}}'$ for every thread $t_i$.
		\item We construct a global trace $gt$ from the partial ordering $\bigcup_{t_i \in \threadsI(i)}.\; {A^{t_i}_{i}}'$ by adding additional orderings between observable and observer events:
		\begin{itemize}
		  \item For every state $s'_{t_i,k}$ in ${A^{t_i}_{i}}'$ of some thread $t_i$ with an outgoing edge $($\code{create($t_j$)}$,t_i)$, we add $s'_{t_i,k} < s'_{t_j,0}$ to the partial ordering where $s'_{t_j,0}$ is the first state in the sequence ${A^{t_j}_{i}}'$ of the created thread $t_j$.
		  \item Further, we consider every two states $s_{k_1}$ and $s_{k_2}$ in $i$ with incoming edge $(\code{unlock(m)},t_i)$ and $(\code{lock(m)},t_j)$, respectively, where none of the edges in between are labeled with action \code{unlock(m)} or \code{lock(m)}. For those states, we pick the corresponding states $s'_{t_i,k_1}$ and $s'_{t_j,k_2}$ in ${A^{t_i}_{i}}'$ and ${A^{t_j}_{i}}'$ and add $s'_{t_i,k_1} < s'_{t_j,k_2}$ to the partial ordering.
		  \item Last, we consider all states $s_k$ in $i$ with an incoming edge $(\code{lock(m)},t_i)$ for some mutex $m$ where no state $s_{k'}$ with incoming edge $(\code{lock(m)},t_j)$ and $k' < k$ exists. Let $s'_{t,0}$ be the first state in the sequence ${A^{t}_{i}}'$. Then we add $s'_{t,0} < s_k$ to the partial ordering.
		\end{itemize}
	  \end{enumerate}
	\item
	We extend the partial order of the create-complete global trace $\globtrace$, by adding an ordering between any end node $u$ of a $\code{create}$ step and the initial node of the created thread $v$, with $u < v$.
	Then, we pick a total order on states that extends this partial order, yielding ordered states $s_0 < s_1 < \dots < s_n$, for some $n$.
	For each state except for start states of threads, there is a program action that immediately preceds it.
	Now, given this total order, we inductively construct an interleaving $\interleaving$ that coincides.
	The initial state of the interleaving is given by extending the state $s_0$ to include global variables, the state of the mutexes and of local variables of other threads, but removing the step counter of actions performed by a thread.
	Each subsequent state is computed by going to the next state in the sequence of the $(s_j)_{ 1\leq j \leq n}$ that has not been incorporated yet and which is not the start state of a thread. For such an $s_{k_0}$, we consider the action $a$ performed by thread $t$ that preceded it, and apply it to obtain the next step of the interleaving.

	The resulting interleaving coincides by construction with $\globtrace$.
	\end{enumerate}
  \end{proof}


\end{toappendix}

\begin{theorem}\label{theorem:equivalence-of-safety}
  A program $P \in \LanguageWithMg$ is safe w.r.t. interleaving semantics if and only if it is safe w.r.t. to the local trace semantics.
\end{theorem}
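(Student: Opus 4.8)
The plan is to prove the two directions by contraposition, in each case using \cref{lemma:interleaving-global-trace-equiv} to move between an interleaving and a coinciding create-complete global trace, and then bridging the remaining gap between global traces and the \emph{local} traces referenced by the local-trace safety definition. The observation that drives both directions is that an action \code{assert(p l)} is a pure predicate on the local state of the asserting thread alone, so whether an assertion step evaluates to false is determined solely by the local configuration of the thread taking it. Consequently, by \cref{def:coincidence}, if an interleaving $\interleaving$ and a global trace $\globtrace$ coincide w.r.t.\ a thread $t$, then---since the two sequences of local configurations and steps of $t$ are equal---an assertion step of $t$ is violated in $\interleaving$ exactly when the corresponding step is violated in $\globtrace$.

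For the implication ``unsafe w.r.t.\ interleaving $\Rightarrow$ unsafe w.r.t.\ local traces'', I would take an interleaving $\interleaving \in \semI{P}$ violating some assertion; the violating step is taken by some $t \in \threadsI(\interleaving)$. By the first part of \cref{lemma:interleaving-global-trace-equiv} there is a create-complete global trace $\globtrace \in \semGT{P}$ coinciding with $\interleaving$, and by the observation above the same assertion step of $t$ is violated in $\globtrace$. To obtain an actual local trace in $\semLT{P}$, I would pass to the causal down-closure of the configuration of $t$ reached right after the failing assertion: following \citet{Schwarz2021}, this down-closure is a local trace with ego-thread $t$, and its consistency conditions are inherited from $\globtrace$ because they only constrain causal predecessors. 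The local state of $t$ at the assertion is unchanged, so the assertion is still violated.

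For the converse, I would start from a local trace $\loctrace \in \semLT{P}$ violating an assertion. A local trace is in particular a global trace and, being a causal down-closure, already satisfies half of create-completeness: the \code{create} action preceding any present thread configuration is a causal predecessor, hence included. It may however contain a \code{create} action with no configuration of the spawned thread, so I would extend $\loctrace$ to a create-complete $\globtrace$ by adjoining, for each such dangling create, the initial configuration of the created thread immediately after the create step. Such a configuration takes no step, holds no mutex, and writes no global, so it cannot violate the Lock or Reads-of-Globals conditions and leaves all existing local states intact. The second part of \cref{lemma:interleaving-global-trace-equiv} then yields an interleaving coinciding with $\globtrace$, into which the violation transfers by the local-state observation.

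The main obstacle is exactly this mismatch between the global traces produced by \cref{lemma:interleaving-global-trace-equiv} and the local traces of the safety definition: one must check that restricting to a causal down-closure (first direction) and completing with the missing initial thread configurations (second direction) both preserve the Causality, Create, Lock, and Reads-of-Globals conditions of a consistent global trace while leaving the offending local state untouched. Everything else reduces to coincidence together with the locality of assertion evaluation.
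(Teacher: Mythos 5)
Your proposal matches the paper's proof in essence: both directions invoke \cref{lemma:interleaving-global-trace-equiv} to move between interleavings and coinciding create-complete global traces, bridge to local traces by extracting a causal down-closure ending at the offending thread's step (one direction) and by completing a local trace with the missing configurations of created threads (the other), and transfer the violation via the fact that assertion evaluation depends only on the thread's local state. Your version is, if anything, slightly more explicit than the paper's about why the down-closure and the create-completion preserve the consistency conditions, but the route is the same.
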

\begin{proof}
  We consider the following two directions:
  \begin{enumerate}
    \item \label{claim:loc-trace-to-interleaving}
    For any local trace $\loctrace \in \semLT{P}$ violating an assertion, there is an interleaving $\interleaving \in \semI{P}$ violating the same assertion.
    \item \label{claim:interleaving-to-loc-trace}
    For any interleaving $\interleaving \in \semI{P}$ violating an assertion, there is a local trace $\loctrace \in \semLT{P}$ violating the same assertion.
  \end{enumerate}
  For claim \ref{claim:loc-trace-to-interleaving}, we extend the local trace $\loctrace$ to a create-complete global trace by inserting \code{create} actions and configurations for created threads where they are still missing. With \cref{lemma:interleaving-global-trace-equiv}, it follows that there is a coinciding interleaving $\interleaving$ for this global trace.
  If the assertion $S$ at step $k$ of thread $t$ in $\loctrace$ is violated, the $k$-th step of $t$ in the interleaving violates the assertion $S$ as well, as the local state of the thread $t$ is the same in the interleaving and the global trace at the preceding state.

  For claim \ref{claim:interleaving-to-loc-trace}, for the interleaving $\interleaving$, we obtain a create-complete global trace $\globtrace$.
  Assume the assertion $S$ at the $k$-th step of thread $t$ in $\interleaving$ is violated.
  From $\globtrace$, we extract the local trace $\loctrace$ ends with the last action of $t$ in $\globtrace$.
  At the $k$-th step of thread $t$ in $\loctrace$, the assertion is violated, as the local state of the thread $t$ there is the same as in the interleaving before the $k$-th step of thread $t$.
\end{proof}

\section{Ghost Witnesses}\label{sec:ghost-witnesses}

To show the correctness of a program, we express invariants that hold at a specific location in the program, in a common format for \emph{witnesses}.
For sequential programs, this can be achieved with a mapping from locations to invariants that hold at specified location.
However, for concurrent programs, the program state may depend on the interleavings.
Therefore, in this setting, invariants should be able to reason about different interleavings.
To this end, we allow additional \emph{ghost variables} in the witness that can be also used in invariants, and \emph{ghost updates} of the corresponding variables.
Ghost updates may not modify local or global variables of the \emph{original program} $P$, but may only modify ghost local and global variables.
Thus, ghost updates may be global or local writes on ghost variables, or global reads.
First, we define witnesses:

\begin{definition}[Ghost Witness]\label{def:witness}
	A \emph{ghost witness} for a program $P$ is a tuple $(\GhostGlobalDeclarations, \GhostLocalVariables, U, I)$.
	There, $\GhostGlobalDeclarations$ is a set of \emph{ghost global declarations}, i.e., triples of the form $(name, type, value)$, for an identifier $name$, its type $type$ and a value $value$ describing the ghost variable's initial value, where newly introduced globals are disjoint from the ones existing in $P$.
	$\GhostLocalVariables$ is a set of \emph{ghost local variables} (with their types) disjoint from $\LocalVariables$.
	The \emph{ghost updates} are given by a partial function $\GhostUpdate$ from edges to (non-empty) sequences of ghost statements.
	The \emph{location invariants} are given by a partial function $\GhostInvariant$ from nodes to  boolean expression over ghost variables, global program variables, and local variables.
\end{definition}
In the following, we refer to ghost witnesses also simply as \emph{witnesses}.
We define the semantics of a witness $W$ for a program $P$ via an instrumented \emph{ghost program} $P^W_{\ghost}$ that can be obtained from the witness.
This instrumentation adds the declarations from $\GhostGlobalDeclarations$ and $\GhostLocalVariables$ and inserts the ghost updates $\GhostUpdate$ and the invariants $\GhostInvariant$ into the control flow graph.
However, for the instrumentation of the ghost updates, we combine them with statements from the original program into \emph{atomic blocks}, i.e., a sequence of multiple statements that can be only executed together, as a single atomic action.
Therefore, we first extend our programming language with atomic blocks, before we continue with the semantics of a witness.

\subsection{Atomic Blocks}
Given an arbitrary statement $a_1$ and statements that are always admissible (i.e., local update, global read, global write) $a_2, \dots, a_n$, we define an atomic block \atomic$\{ a_1; \dots; a_n\}$.
The interleaving semantics in \cref{sec:interleaving-semantics} is extended such that consistent steps may involve atomic blocks.
Such an atomic block may only be executed if the first statement $a_1$ is admissible.
The subsequent state obtained after the atomic action needs to reflect the sequential composition of the effects of the individual steps $a_1; \dots; a_n$.
Thus, an atomic block may only be executed as a whole.
In case that $a_k$, with $1 \leq k \leq n$, is an assertion, we say that $a_k$ is violated if and only if the assertion evaluates to false after executing $a_1; \dots; a_{k-1}$ on the preceding state of the interleaving.
We call the extension of the language $\Language$ with programs containing atomic blocks $\LanguageWithAtomic$.

\subsection{Witness-Instrumented Programs}

We define how, given a program $P$ and a witness $W = \Witness$, a program can be constructed from $P$ containing the instrumentation from the witness.
For a given $P$ and $W$, we define a helper function $\ghostifyInvariant$ that yields an instrumentation for asserting the invariants supplied via $\GhostInvariant$, and a function $\ghostifyUpdate$ that replaces actions on edges with atomic blocks consisting of the original action and the ghost update, if any.
More precisely,
\begin{itemize}
	\item The partial map $\ghostifyInvariant: \Nodes \partialto \Edges$ yields for a node $u$, where $\GhostInvariant(u) = i$ and $(u, a', v) \in \Edges$, for some $a'$, the edge $(u, \atomic\{a_1; \dots; a_k\}, (u,v))$, where $a_1, \dots, a_k$ is a sequence of actions to evaluate and then assert the invariant $i$. There, $(u, v)$ is a new node. If for $u$ the mapping $\GhostInvariant(u)$ is undefined, $\ghostifyInvariant\,u$ is undefined as well.
	\item The map $\ghostifyUpdate: \Edges \to \Edges$ transforms each edge $(u, a, v)$ of the original program into an edge $(n, a', v)$.
	In case that $\GhostUpdate((u, a, v))$ is specified, we set $a' = \atomic\{a; a_1; \dots; a_k\}$ where $a_1; \dots; a_k$ is the sequence of updates provided by $\GhostUpdate((u, a, v))$. In case $\GhostUpdate((u, a, v))$ is undefined, $a' = a$.
	In case $\ghostifyInvariant\,u$ is defined, we  set $n = (u, v)$, and otherwise $n = u$.
\end{itemize}

\begin{definition}[Witness-instrumented Program]\label{def:witness-instrumentation}\\
	Given a witness $W = \Witness$ for a program $P$,
	the \emph{witness-instrumented  program} $P_{\ghost}^W$ is obtained by adding $\GhostGlobalDeclarations$ to the set of global declarations and adding $\GhostLocalVariables$ to the set of local variables of $P$.
	The set of nodes in the resulting program is obtained by combining the existing set of nodes from $P$ with the new nodes added by $\ghostifyInvariant$.
	The set of edges is obtained by applying $\ghostifyInvariant$ to the nodes in $P$ and applying $\ghostifyUpdate$ to the edges of $P$.
\end{definition}
Now, validity of a witness w.r.t. the interleaving semantics can be defined:

\begin{definition}[Valid Witness w.r.t. Interleaving Semantics]\label{def:valid-witness-interleavings}
    A witness $W$ for a program $P$ is \emph{valid w.r.t. the interleaving semantics} if and only if the witness-instrumented program $P_{\ghost}^W$ is safe w.r.t. the interleaving semantics.
\end{definition}

\subsection{Preservation of Safety Properties in Ghost Programs}
Here, we go on to show that if the original program is unsafe, then a program obtained by instrumentation with a witness will be unsafe as well.
Additionally, if an assertion is safe in the original program, the same assertion will be safe in any program obtained by instrumentation with a witness.

Assume that a verifier produced the verdict that a program $P$ is safe, and produces a witness $W$ that yields the witness-instrumented program $P_{\ghost}^W$.
By checking that the witness is valid, it should follow that $P$ is safe.
In other words, violations of safety of the original program should be preserved in the witness-instrumented program.
To obtain from an interleaving of $P$ an interleaving of $P_{\ghost}^W$, the main idea is to extend the states to ghost variables, add edges introduced by $\ghostifyInvariant$ and to replace steps taken on an edge $e$ with a step on the edge introduced by $\ghostifyUpdate\,e$.
This notion is formalized by the definition of $\projectGhostify$:

\begin{definition}
Given a program $P$, a witness $W$, and an interleaving $i \in \semI{P}$, the function
$\projectGhostify$ is defined inductively and
\begin{enumerate}[label={G\arabic*}, ref={G\arabic*}]
	\item \label{ghostify:extend-initial} extends the initial state with local and global ghost variables with their initial values,
	\item \label{ghostify:replace-steps}
	for each subsequent step $(u, a, v)$ by a thread $t$ in the interleaving, adds
	\begin{itemize}
		\item two steps taken by thread $t$ corresponding to edges $e$ and $e'$, if $\ghostifyInvariant\,u = e$ and $\ghostifyUpdate\,(u, a, v) = e'$.
		\item one step taken by thread $t$ corresponding to edges $e$, if $\ghostifyInvariant\,u$ is undefined and $\ghostifyUpdate\,(u, a, v) = e$.
	\end{itemize}
\end{enumerate}
\end{definition}
As by requirement the ghost statements do not differ with regard to their treatment of non-ghost variables, the local and global states for these variables evolve in the resulting interleaving in the same manner as in the input interleaving.
We use the function $\deghostifyState$ that takes a state $(L, M, G)$, and yields a state where all local and global variables that are not defined by $P$ are removed.
How one obtains a consistent interleaving of $P_{\ghost}^W$ from a consistent interleaving of $P$ is formalized in \cref{lemma:preservation-of-runs-in-ghost-program}.

\begin{lemma}\mylabel{lemma:preservation-of-runs-in-ghost-program}
	Let $P \in \Language$ be a program and $W$ a witness for $P$. 
	Given an interleaving $i \in \semI{P}$, then $\projectGhostify\,i \in \semI{P_{\ghost}^W}$, i.e., is a consistent interleaving of $P_{\ghost}^W$.
	For the last state $s$ of $i$ and the last state $s'$ of $\projectGhostify\,i$ it holds that $\deghostifyState\,s' = s$.
\end{lemma}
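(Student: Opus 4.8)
The plan is to prove both assertions simultaneously by induction on the number of steps of the interleaving $i$, carrying along the strengthened invariant that, for every prefix, $\projectGhostify$ applied to it is a consistent interleaving of $P_{\ghost}^W$ (in the interleaving semantics extended to $\LanguageWithAtomic$) whose last state $\hat s$ satisfies $\deghostifyState\,\hat s = s$, where $s$ is the last state of the prefix. The key structural facts I will lean on throughout are that ghost statements only read and write ghost variables, that every original action computes its effect on the non-ghost components through pure functions of those same non-ghost components, and that $\deghostifyState$ deletes exactly the ghost variables introduced by $W$.

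For the base case, $i$ is the single initial state $s_0$ of $P$. By rule \ref{ghostify:extend-initial}, $\projectGhostify$ extends $s_0$ with the ghost globals of $\GhostGlobalDeclarations$ and the ghost locals $\GhostLocalVariables$ at their declared initial values, leaving the thread map, the start node of \main, the mutex map, and the non-ghost globals untouched. Since $P_{\ghost}^W$ adds precisely these declarations, this extended state is a valid initial state for $P_{\ghost}^W$ per rule \ref{interleaing:initial}, and $\deghostifyState$ of it returns $s_0$.

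For the inductive step, write $i = i'' \cdot \bigl(s \semarrow{t}{(u,a,v)} s_{\mathrm{next}}\bigr)$ and assume the invariant for $i''$, so the last state $\hat s$ of $\projectGhostify\,i''$ agrees with $s$ on all non-ghost components; in particular $t$ is at node $u$, and the mutex map and $t$'s non-ghost locals coincide with those in $s$. I distinguish the two cases of rule \ref{ghostify:replace-steps}. If $\ghostifyInvariant\,u$ is undefined, a single step is appended along $\ghostifyUpdate\,(u,a,v) = (u,a',v)$ with $a'$ equal to $a$ or to $\atomic\{a; a_1; \dots; a_k\}$; its admissibility is governed by the leading statement $a$, whose admissibility for $\code{lock}$/$\code{unlock}$/$\code{Pos}$/$\code{Neg}$ depends only on the non-ghost state and thus transfers by the induction hypothesis. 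If $\ghostifyInvariant\,u = e$ is defined, two steps are appended: first the invariant-assertion step along $e = (u, \atomic\{\dots\}, (u,v))$, whose leading statement is a local evaluation and hence always admissible, moving $t$ to the fresh node $(u,v)$ without changing any variable; then the step along $((u,v), a', v)$, taken exactly as before. A point I want to make explicit is that assertions do not appear among the admissibility conditions of rule \ref{interleaving:extending}, so consistency of $\projectGhostify\,i$ never depends on any invariant or assertion actually holding.

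It remains to re-establish $\deghostifyState\,s' = s_{\mathrm{next}}$ for the new last state $s'$. Here I use that the semantics of the atomic block sequentially composes the effect of $a$ with those of $a_1, \dots, a_k$: applying $a$ to $\hat s$ reproduces the non-ghost components of $s_{\mathrm{next}}$ (since $a$ depends only on non-ghost data and $\deghostifyState\,\hat s = s$), while the subsequent ghost statements touch only ghost variables and hence leave the non-ghost projection fixed; the invariant edge, when present, changes nothing. I expect the main obstacle to be the thread-creation case: I would argue that $\ghostifyUpdate$ introduces no additional $\code{create}$ actions (ghost updates are confined to ghost writes and global reads) and that the inserted invariant edges carry none either, so the creation history and therefore every thread id produced by $\newtid$ match those in $i$ exactly; the newly created thread runs the instrumented template with its ghost locals at their initial values, which $\deghostifyState$ strips away to recover the original created thread's local state. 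Combining these observations yields $\deghostifyState\,s' = s_{\mathrm{next}}$ and closes the induction.
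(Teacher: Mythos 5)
Your proof is correct and takes essentially the same route as the paper's: induction on the length of the interleaving, a base case via \ref{ghostify:extend-initial} and \ref{interleaing:initial}, and an inductive step that splits on whether $\ghostifyInvariant\,u$ is defined, establishing consistency via admissibility of the leading statement of each atomic block and equality of the non-ghost projections because ghost statements touch only ghost variables. Your extra observations --- that assertions never figure in the admissibility conditions of \ref{interleaving:extending}, so consistency is independent of whether invariants hold, and that thread ids computed by $\newtid$ are unaffected since the instrumentation introduces no new \code{create} actions --- are details the paper's proof leaves implicit, and they only strengthen the argument.
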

The proof is by induction over the length of the interleaving $i$. A proof can be found in \cref{proof:preservation-of-runs-in-ghost-program}.
Building on \cref{lemma:preservation-of-runs-in-ghost-program}, we can now say that violations in the original program are preserved in the witness-instrumented program.

\begin{theorem}\label{theorem:violation-preservation}
	Let $P \in \Language$ be a program and $W = \Witness$ a witness for $P$.
	If there is an interleaving $i \in \semI{P}$ that violates an assertion $a$ at edge $(u, a, v)$ taken by thread $t$, the interleaving $\projectGhostify\,i \in \semI{P_{\ghost}^W}$ violates the assertion $a$
	at an edge $\ghostifyUpdate\, (u, a, v)$ taken by thread $t$.
\end{theorem}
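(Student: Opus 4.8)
The plan is to build directly on \cref{lemma:preservation-of-runs-in-ghost-program}, which already guarantees that $\projectGhostify\,i$ is a consistent interleaving of $P_{\ghost}^W$ and that the program (non-ghost) part of its states agrees with that of $i$ via $\deghostifyState$. What remains is to locate the step of $\projectGhostify\,i$ that corresponds to the violating assertion step of $i$, and to check that the assertion is still violated there. The correspondence of program variables itself comes for free from the lemma; the work is in the bookkeeping and in invoking the atomic-block semantics.

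First I would isolate the assertion step in $i$, say $s_j \semarrow{t}{(u, a, v)} s_{j+1}$, where $a \equiv \code{assert(p\,l)}$ evaluates to false on the local state of $t$ in $s_j$. By the construction of $\projectGhostify$ \pcref{ghostify:replace-steps}, this single step is replaced either by one step on the edge $\ghostifyUpdate\,(u,a,v)$, or, when $\ghostifyInvariant\,u$ is defined, by two steps: an invariant-asserting step followed by the step on $e' = \ghostifyUpdate\,(u, a, v)$. In both cases the relevant step is the one on $e'$ taken by $t$, which is exactly the edge named in the statement.

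Next I would pin down the state $\tilde{s}$ of $\projectGhostify\,i$ immediately preceding this $e'$-step and argue that its program-variable part equals that of $s_j$. Applying \cref{lemma:preservation-of-runs-in-ghost-program} to the prefix of $i$ ending in $s_j$ yields a state whose $\deghostifyState$-image is $s_j$. The only step that can sit between that state and the $e'$-step is the invariant-asserting step introduced by $\ghostifyInvariant$, which is an atomic block over ghost variables only and hence leaves all local and global variables of $P$ — in particular the local state of $t$ — unchanged. (That this intermediate assertion might itself be false is harmless: assertions are always admissible, so they never block the interleaving.) Thus the local state of $t$ restricted to program variables coincides in $\tilde{s}$ and $s_j$.

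Finally I would invoke the atomic-block violation semantics. By the definition of $\ghostifyUpdate$, the action on $e'$ is either $a$ itself or the atomic block $\atomic\{a; a_1; \dots; a_k\}$ in which the assertion $a$ is the \emph{first} statement, with the ghost updates following it. By the violation rule for atomic blocks, $a$ is violated exactly when it evaluates to false on the preceding state, since no statement of the block runs before it. As $p$ depends only on the local state of $t$, which agrees between $s_j$ and $\tilde{s}$ on all program variables, the assumed falseness of $p$ at $s_j$ transfers to $\tilde{s}$, so $a$ is violated at edge $\ghostifyUpdate\,(u, a, v)$ taken by $t$ in $\projectGhostify\,i$. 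I expect the main obstacle to be precisely the identification of the correct preceding state in the presence of the optional invariant-asserting step, together with the appeal to the atomic-block semantics that places the original assertion first in the block; everything else reduces to the preservation guarantee of \cref{lemma:preservation-of-runs-in-ghost-program}.
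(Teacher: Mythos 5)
Your proposal is correct and follows essentially the same route as the paper's proof: both restrict to the prefix of $i$ ending just before the violating step, invoke \cref{lemma:preservation-of-runs-in-ghost-program} to transfer the program-variable state, split on whether $\ghostifyInvariant\,u$ is defined (one step vs.\ two, with the intermediate invariant-check leaving program variables untouched), and conclude that $a$ fails on the edge $\ghostifyUpdate\,(u,a,v)$. Your version is in fact slightly more explicit than the paper's, since you spell out that $a$ sits \emph{first} in the atomic block and hence, by the block's violation semantics, is evaluated directly on the preceding state --- a detail the paper compresses into ``the assertion $a$ is contained in $a'$ and will be violated.''
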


\begin{proof}
	Let $i'$ be the prefix of $i$ that ends at the state before the assertion is violated.
	From \cref{lemma:preservation-of-runs-in-ghost-program}, it follows that $\projectGhostify\,i' \in \semI{P_{\ghost}}$, and the state of variables, owners of mutexes and locations of threads in the last state of $\projectGhostify\,i'$ agree with those in the last state of $i'$.
	In case that $\ghostifyInvariant(u, a, v)$ is undefined, the interleaving can be prolonged by a step $(u, a', v) = \ghostifyUpdate\,(u, a, v)$ taken by thread $t$, where $a'$ contains the assertion $a$ that will fail.
	If $\ghostifyInvariant(u, a, v)$ is defined, the interleaving can be prolonged by two steps $(u, a'', (u, v))$ and $((u, v), a', v)$ taken by thread $t$, where $(u, a'', (u, v)) = \ghostifyInvariant\,u$ and $((u, v), a', v) = \ghostifyUpdate\,(u, a, v)$.
	Then, the assertion $a$ is contained in $a'$ and will be violated.
\end{proof}

\begin{corollary}[Preservation of Correctness]\label{theorem:violation-preservation}
	Let $P$ be a program and $W$ be a witness for $P$. If $W$ is valid, then $P$ is safe.
\end{corollary}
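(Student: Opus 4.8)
The plan is to prove the corollary as the direct contrapositive of the preceding theorem on violation preservation. First I would unfold the definition of validity: by \cref{def:valid-witness-interleavings}, the assumption that $W$ is valid means precisely that the witness-instrumented program $P_{\ghost}^W$ is safe w.r.t.\ the interleaving semantics, i.e., no interleaving in $\semI{P_{\ghost}^W}$ violates any assertion of $P_{\ghost}^W$. The goal is then to conclude that $P$ is safe, so it suffices to show that an assertion violation in $P$ would force an assertion violation in $P_{\ghost}^W$.

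I would argue by contraposition. Suppose $P$ were not safe. Then, by definition of safety w.r.t.\ the interleaving semantics, there is an interleaving $i \in \semI{P}$ that violates some assertion $a$ on an edge $(u, a, v)$ taken by some thread $t$. Applying the violation-preservation theorem to this $i$ yields both that $\projectGhostify\,i \in \semI{P_{\ghost}^W}$ and that this interleaving violates the assertion $a$ at the edge $\ghostifyUpdate\,(u, a, v)$ taken by $t$; the machinery for transporting the interleaving and preserving the relevant local states is already supplied by \cref{lemma:preservation-of-runs-in-ghost-program}.

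The step that needs care is arguing that this genuinely constitutes a violation of the safety of $P_{\ghost}^W$, rather than merely a violation of an assertion of the original $P$. Here I would observe that the assertion $a$ survives instrumentation: $\ghostifyUpdate$ either leaves the edge unchanged or wraps the original action, as the \emph{first} statement, inside an atomic block $\atomic\{a; a_1; \dots; a_k\}$ together with ghost updates $a_1; \dots; a_k$. Since ghost updates do not by definition modify any variable of $P$, and since the atomic-block semantics evaluates $a$ (being first) on the state preceding the block, the truth value of $a$ is unaffected by the inserted ghost statements; hence the atomic block is violated exactly when $a$ is. Thus $a$ remains an assertion of $P_{\ghost}^W$, and its violation on $\projectGhostify\,i$ witnesses that $P_{\ghost}^W$ is unsafe, contradicting the validity of $W$. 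Consequently $P$ must be safe.

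I expect the only real subtlety to be this last bookkeeping about how original assertions are embedded into atomic blocks and that their violation semantics is preserved; it is not a genuine obstacle, since the heavy lifting has already been carried out in \cref{lemma:preservation-of-runs-in-ghost-program} and the violation-preservation theorem. Once validity is unfolded via \cref{def:valid-witness-interleavings}, the corollary follows almost immediately by contraposition.
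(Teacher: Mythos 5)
Your proposal is correct and follows essentially the same route as the paper: the corollary is a direct consequence of the violation-preservation theorem (via \cref{lemma:preservation-of-runs-in-ghost-program}) combined with unfolding validity through \cref{def:valid-witness-interleavings}, argued by contraposition. Your extra bookkeeping --- that the original assertion appears as the \emph{first} statement of the atomic block produced by $\ghostifyUpdate$ and is therefore evaluated on the preceding state, so ghost statements cannot change its truth value --- is a detail the paper leaves implicit in its atomic-block semantics, and it is handled correctly.
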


A further property of interest is that the witness format ensures that assertions that cannot be violated in the original program are unaffected by the instrumentation.

\begin{theorem}\label{lemma:correctness-preservation}
	Let $W$ be a witness for a program $P \in \Language$.
	If there is an interleaving $\interleaving \in \semI{P_{\ghost}^W}$ that violates an assertion $a$ at an edge $\ghostifyUpdate\,(u, a, v)$,
	then there is an interleaving $i' \in \semI{P}$ that violates the assertion $a$ at an edge $(u, a, v)$.
\end{theorem}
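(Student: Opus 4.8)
The plan is to invert the construction of $\projectGhostify$ from \cref{lemma:preservation-of-runs-in-ghost-program}: from the offending interleaving of $P_{\ghost}^W$ I recover an interleaving of $P$ through a de-ghostification $\deghostifyInterleaving$. Applied to an interleaving of $P_{\ghost}^W$, this map (i) strips the ghost variables from every state via $\deghostifyState$, (ii) deletes the invariant-assertion steps that $\ghostifyInvariant$ inserted in front of original edges, mapping a thread sitting at such an intermediate node back to the original node, and (iii) replaces each atomic-block step on an edge $\ghostifyUpdate\,(u,a,v)=\atomic\{a; a_1; \dots; a_k\}$ by the plain step $(u,a,v)$. The guiding intuition is that, since ghost updates touch only ghost variables, the mutex component and the original parts of the local and global states evolve in $P_{\ghost}^W$ exactly as they do in $P$, so $\deghostifyInterleaving$ lands in $\semI{P}$.

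First I would prove the analogue of \cref{lemma:preservation-of-runs-in-ghost-program} for $\deghostifyInterleaving$: it maps every consistent interleaving of $P_{\ghost}^W$ to a consistent interleaving of $P$, and for the last state $s'$ of the instrumented interleaving and the last state $s$ of its projection one has $s = \deghostifyState\,s'$. The argument is induction on interleaving length. In the step case, a step on $\ghostifyUpdate\,(u,a,v)$ is admissible iff its leading action $a$ is admissible, and admissibility of $a$ is decided only by data that $\deghostifyState$ preserves --- the mutex map for \code{lock}/\code{unlock} and the original local variables for \code{Pos}/\code{Neg} --- so the unwrapped step $(u,a,v)$ is admissible in $P$ and reflects the same effect on non-ghost state. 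A step inserted by $\ghostifyInvariant$ changes no non-ghost variable and carries no assertion of $P$, so deleting it is sound, and the thread-id bookkeeping of \code{create} is unaffected because it depends only on the creation history, which the projection preserves.

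Applying this to the prefix $\interleaving'$ of $\interleaving$ ending in the state immediately before the violating step gives $i' = \deghostifyInterleaving\,\interleaving' \in \semI{P}$ whose last state agrees with that of $\interleaving'$ on all original variables and on every thread's location. Because the assertion $a$ is the leading action of the atomic block $\ghostifyUpdate\,(u,a,v)$, it is evaluated on the preceding local state of thread $t$; and as $a$ has the form \code{assert(p l)}, depending only on the original local variables, its truth value coincides in $\interleaving'$ and $i'$. Extending $i'$ by the step $(u,a,v)$ taken by $t$ therefore violates $a$, producing the required interleaving of $P$.

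The main obstacle is the inverse preservation statement of the second paragraph: one must verify that unwrapping atomic blocks and deleting $\ghostifyInvariant$-steps preserves admissibility and the reflect conditions and keeps the thread-creation numbering aligned. Each of these reduces to the single fact that ghost statements leave original and mutex state untouched, so the verification is the routine mirror image of \cref{lemma:preservation-of-runs-in-ghost-program}.
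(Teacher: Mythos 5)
Your proposal matches the paper's own argument: the paper likewise defines an inductive de-ghostification map $\deghostifyInterleaving$ that projects states through $\deghostifyState$, drops the steps inserted by $\ghostifyInvariant$ (collapsing intermediate nodes $(u,v)$ back to $u$), and replaces each $\ghostifyUpdate$-edge step by the unique original edge $(u,a,v)$, then proves by induction that the result is a consistent interleaving of $P$ agreeing with the original on all non-ghost data, from which the transferred violation follows. Your additional observations --- that admissibility of an atomic block is decided by its leading action, that the assertion is that leading action and is evaluated on the preserved pre-state, and that $\newtid$ depends only on creation history --- are exactly the details the paper's proof sketch leaves implicit, so your write-up is a correct elaboration of the same route rather than a different one.
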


A proof sketch constructing the interleaving $i'$ can be found in \ifextendedversion \cref{proof:preservation-of-correctness-in-ghost-program}\else the extended version of the paper\cite{arxiv-extended}\fi.

\begin{toappendix}
\subsection{Proofs for Preservation of Safety in Ghost Programs}
\label{app:preservation-of-safety-with-ghosts}

\subsubsection{Proof of \cref{lemma:preservation-of-runs-in-ghost-program}}

\begin{proof}\label{proof:preservation-of-runs-in-ghost-program}
	We proceed by induction over the number of states $n$ of the interleaving $i$.
	For the base case $n = 1$, we have an interleaving $i \in \semI{P}$ consisting solely of the initial start state.
	Due to step \ref{ghostify:extend-initial}, $\projectGhostify\,i$ satisfies criterion \ref{interleaing:initial} for $P_{\ghost}^W$, and thus is a consistent interleaving of $P_{\ghost}^W$, and the states of the variables and mutexes in $P$ and the locations of threads agree.
	Now assume that we have an interleaving $i \in \semI{P}$ of length $n+1$.
	Then, for the interleaving $i'$ not including the last step of $i$, we have by induction hypothesis that $\projectGhostify\,i' \in \semI{P_{\ghost}^W}$, is a consistent interleaving of $P_{\ghost}^W$, and that in particular their last states agree w.r.t. to variables and mutexes in $P$ and the locations of threads.
	We show that the last steps in $\projectGhostify\,i$ not included in $\projectGhostify\,i'$ satisfy the condition \ref{interleaving:extending}.
	Assume the last step in $i$ is performed by thread $t$ at some edge $(u, a, v)$.
	We first consider the case that $\ghostifyInvariant(u, a, v)$ is undefined.
	Then, the last step in $\projectGhostify\,i$ is, according to step \ref{ghostify:replace-steps}, taken by thread $t$ for the edge $\ghostifyUpdate\,(u, a, v) = (u, a', v)$, where the effect of $a$ and $a'$ on the variables of $P$ agrees,
	and there are no additional operations in $a'$ that may not be admissible.
	As the preceding states agree, and $a$ and $a'$ do not differ w.r.t. their effect on variables in mutexes in $P$, the result states agree on the state of the mutexes and variables in $P$ and the locations of threads.
	Now, consider the case that $\ghostifyInvariant (u, a, v) = (u, a'', (u, v))$ is defined.
	Accordingly, the step $(u, a'', (u, v))$ is performed by thread $t$ first, where $a''$ is an atomic block that evaluates an invariant and then asserts it. This operation is admissible and does not affect the state of local and global non-ghost variables as well as mutexes.
	The next step inserted by $\projectGhostify\,i$ is the thread $t$ taking the edge $\projectGhostify\,(u, a,v) = ((u, v), a', v)$.
	Again, the effect of this action on the local and global variables as well as mutexes of $P$ is the same as for $(u, a, v)$.
	Thus, the last states of $\projectGhostify\,i$ and $i$ agree on the values of variables, the owners of mutexes in $P$ and the locations of threads.
\end{proof}

\subsubsection{Correctness Preservation}
\label{proof:preservation-of-correctness-in-ghost-program}
Assume we are given a program $P \in \Language$, and a witness $\Witness$ for this program, resulting in a witness-instrumented program $P_{\ghost}^W$.
Intuitively, given an interleaving $\interleaving \in \semI{P_{\ghost}^W}$ reaching a state $s$, we construct an interleaving $\interleaving' \in \semI{P}$ that reaches the state $\deghostifyState\,s$ by projecting edges that were transformed by $\ghostifyUpdate$ back.
Additionally, special care needs to be taken to cut out steps on edges introduced by $\ghostifyInvariant$.

\begin{definition}
	Let $P \in \Language$ be a program and  $W$ be a witness for $P$.
	Given an interleaving $\interleaving \in \semI{P_{\ghost}^W}$, the function $\deghostifyInterleaving$ is inductively defined and
	\begin{enumerate}[label={H\arabic*}, ref={H\arabic*}]
		\item Projects the initial state $s$ to the state $\deghostifyState\,s$;
		\item For each subsequent step $(n, a, m)$ performed by a thread $t$ yielding state $s$:
		\begin{itemize}
			\item In case that $m = (u, v)$ for some nodes $u, v$ of $P$, it does not extend its result.
			(As the action $(n, a, m)$ was inserted due to a invariant for the node $n$.)
			\item In case that $(n, a, m) = (u, a, v)$ or  $(n, a, m) = ((u, v), a, v)$ for some nodes $u, v$ of $P$, it adds one step $(u, a', v)$, where $(u, a', v) \in \Edges$ is the unique edge connecting $u$ and $v$ in $P$, and the state $\deghostifyState\,s$.
		\end{itemize}
	\end{enumerate}
\end{definition}

\begin{lemma}\label{lemma:correctness-preservation-app}
	Let $P \in \Language$ be a program and $W$ a witness for it.
	Assume there is an interleaving $\interleaving \in \semI{P_{\ghost}^W}$, that violates an assertion $a$ at an edge $\ghostifyUpdate\,(u, a, v)$.
	Then, the interleaving $\deghostifyInterleaving\,\interleaving$ is in $\semI{P}$ and violates the assertion $a$ at an edge $(u, a, v)$.
\end{lemma}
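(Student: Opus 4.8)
The plan is to prove a strengthened statement by induction on the number of steps of $\interleaving$ and then read off the claimed violation from the last step, mirroring \cref{lemma:preservation-of-runs-in-ghost-program} in the reverse direction. Concretely, for every prefix $\interleaving_{\le j}$ of $\interleaving$ I would show that $\deghostifyInterleaving\,\interleaving_{\le j}$ is a consistent interleaving of $P$ whose last state corresponds to $\deghostifyState$ applied to the last state of $\interleaving_{\le j}$. Since $P_{\ghost}^W$ contains intermediate nodes $(u,v)$ introduced by $\ghostifyInvariant$ that do not exist in $P$, the correspondence must be stated up to a projection on nodes that maps each intermediate node $(u,v)$ to $u$ and is the identity on the genuine $\Nodes$ of $P$: under this projection the two last states agree on the values of the $P$-globals, the owners of the mutexes, the local states of all threads (after stripping ghost variables), and the projected location of every thread. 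The base case is immediate because the initial state of $P$ is exactly $\deghostifyState$ of the initial state of $P_{\ghost}^W$.

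For the inductive step I would distinguish the two cases of the definition of $\deghostifyInterleaving$ on the next step of $\interleaving$, taken by a thread $t$. If the step lands on an intermediate node $(u,v)$, it is the invariant-check edge $\ghostifyInvariant\,u = (u, c, (u,v))$, whose action $c$ is an atomic block that merely evaluates and asserts the invariant; here $\deghostifyInterleaving$ appends nothing. Consistency of the projection is trivially preserved, and the correspondence is maintained because $c$ touches neither the mutexes nor any $P$-variable (invariant evaluation writes only into ghost locals, and by the atomic-block semantics $c$ performs no lock/unlock), while $t$ moving from $u$ to $(u,v)$ leaves its node-projection at $u$; note that whether the invariant assertion itself holds is irrelevant to consistency, so such steps may simply be dropped. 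If instead the step lands on a genuine node $v$, it is taken along an edge $\ghostifyUpdate\,(u,a,v) = (n, b, v)$ that is the image of a unique $P$-edge $(u, a, v)$, where $b = \atomic\{a; \dots\}$ is an atomic block whose admissible first statement is the original action $a$. By the induction hypothesis $t$ sits at $u$ in $\deghostifyInterleaving\,\interleaving_{\le j-1}$, so I can extend that run by the step $(u, a, v)$; admissibility transfers because the enabling condition of $a$ (a lock/unlock precondition or a branch guard) depends only on the $P$-state and the local state of $t$, which agree, and the effect of $b$ on the $P$-variables, the mutexes, and the location of $t$ coincides with that of $a$ since the appended ghost statements modify only ghost variables. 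Thread-id bookkeeping is unaffected, exactly as in \cref{lemma:preservation-of-runs-in-ghost-program}, because no invariant-check or ghost step is a \code{create}.

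Finally I would specialize to the violating step. The edge $\ghostifyUpdate\,(u,a,v)$ at which the assertion is violated lands on the genuine node $v$, so it falls under the second case and $\deghostifyInterleaving$ ends with the step $(u, a, v)$ taken by $t$. Because the assertion $a$ is the first statement of the atomic block $\ghostifyUpdate\,(u,a,v)$, it is evaluated on the state immediately preceding the block; by the established correspondence the local state of $t$ there equals the local state of $t$ in the last state of $\deghostifyInterleaving\,\interleaving_{\le j-1}$, so the pure assertion predicate evaluates to false in $P$ as well. Hence $\deghostifyInterleaving\,\interleaving \in \semI{P}$ violates $a$ at $(u,a,v)$. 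I expect the only genuine obstacle to be the careful formulation of the state correspondence across the intermediate nodes: in particular, justifying that discarding the invariant-check steps keeps the projected run consistent, and that the admissibility of the first action of each atomic block — which alone governs whether the block may fire — is determined solely by the $P$-visible state that the induction hypothesis keeps synchronized.
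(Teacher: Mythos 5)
Your proposal follows essentially the same route as the paper's proof, which is only a brief sketch: an induction showing that $\deghostifyInterleaving\,\interleaving$ is a consistent interleaving of $P$ whose states agree with those of $\interleaving$ after projecting away ghost variables and the intermediate invariant-check steps, followed by transferring the violated assertion via the agreement of the preceding states. Your write-up correctly fills in the details the paper leaves implicit (the node projection $(u,v)\mapsto u$, admissibility of the first action of each atomic block, and the irrelevance of invariant-check outcomes to consistency), so it is a sound elaboration of the paper's own argument rather than a different approach.
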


\begin{proof}[Sketch]
	One first proves by induction that interleaving $\deghostifyInterleaving\,\interleaving$ is in $\semI{P}$ and that the states obtained by the interleavings $\interleaving$ and $\deghostifyInterleaving\,\interleaving$ agree when projected to variables in $P$ and projecting away intermediate steps performed by $\interleaving$.
	Then, it follows that the assertion violated in $\interleaving$ is violated in $\deghostifyInterleaving\,\interleaving$ as well.
\end{proof}

\cref{lemma:correctness-preservation-app} provides a construction for an interleaving satisfying the property described in \cref{lemma:correctness-preservation}.
Thus, \cref{lemma:correctness-preservation} follows.
\end{toappendix}

\begin{corollary}[Preservation of Violation]\label{theorem:correctness-preservation}
	Let $P$ be a program and $W$ a witness for $P$. If $W$ is not valid because an assertion from $P$ can be violated in $P_{\ghost}^W$, then $P$ is unsafe.
\end{corollary}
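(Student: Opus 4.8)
The plan is to obtain this corollary as an immediate consequence of \cref{lemma:correctness-preservation}, which already supplies the backward transfer of assertion violations from the witness-instrumented program to the original program; the work here is entirely in reading off the hypothesis correctly and invoking the definition of safety.

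First I would unpack the assumption. By \cref{def:valid-witness-interleavings}, $W$ being invalid means exactly that $P_{\ghost}^W$ is \emph{not} safe w.r.t.\ the interleaving semantics, i.e.\ some interleaving of $P_{\ghost}^W$ violates some assertion. The qualifier ``because an assertion \emph{from $P$} can be violated'' further pins down which assertion is at fault: it is one of the assertions already present in $P$, transformed by $\ghostifyUpdate$, rather than one of the assertion statements that $\ghostifyInvariant$ freshly inserts to check a location invariant. Concretely, the hypothesis gives an interleaving $\interleaving \in \semI{P_{\ghost}^W}$ that violates an assertion $a$ sitting on an edge of the form $\ghostifyUpdate\,(u, a, v)$, where $(u, a, v)$ is the original edge of $P$ carrying $a$. (Because $a$ is the first statement of the atomic block produced by $\ghostifyUpdate$, its evaluation is on the preceding interleaving state, matching the notion of violation inside an atomic block.)

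Second I would apply \cref{lemma:correctness-preservation} to this interleaving $\interleaving$. Its conclusion directly produces an interleaving $i' \in \semI{P}$ that violates the \emph{same} assertion $a$, now at the original edge $(u, a, v)$ of $P$. Finally I would invoke the definition of safety w.r.t.\ the interleaving semantics: a program is safe precisely when no interleaving violates any of its assertions. Since $i'$ is an interleaving of $P$ in which $a$ is violated, $P$ is not safe, i.e.\ $P$ is unsafe, which is the claim.

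Since the substantive content lives in \cref{lemma:correctness-preservation}, there is no genuinely hard step left; the corollary is essentially a repackaging. The single point needing care is bookkeeping: one must keep the assertions originating in $P$ (those moved by $\ghostifyUpdate$) separate from the assertion statements that $\ghostifyInvariant$ adds to check invariants, because \cref{lemma:correctness-preservation} speaks only about the former. This is exactly the distinction that the phrase ``an assertion from $P$'' in the hypothesis isolates, so once the hypothesis is read correctly the matching of the violated assertion to an original edge $(u, a, v)$ is immediate.
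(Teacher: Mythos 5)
Your proposal is correct and matches the paper's intended argument: the paper presents this corollary without a separate proof, precisely as the immediate consequence of the preceding theorem (\cref{lemma:correctness-preservation}) that you invoke, combined with unfolding \cref{def:valid-witness-interleavings} and the definition of safety. Your added bookkeeping remark --- that the hypothesis singles out assertions of $P$ carried by $\ghostifyUpdate$-edges rather than the assertion statements inserted by $\ghostifyInvariant$ --- is exactly the distinction the paper's phrasing ``an assertion from $P$'' is meant to capture.
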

\section{Valid Witnesses in the Local Trace Semantics}\label{sec:valid-witness-local-traces}
For witness-instrumented programs, we have introduced the notion of atomic blocks to the language.
While this notion is convenient for the interleaving semantics, it is less clear for the thread-modular local trace semantics, where communication between threads is assumed to happen via observable and observing events.
Thus, we discuss how atomic blocks can be encoded via lock and unlock actions, i.e., via critical sections.
Then we show the equivalence of the validity of witnesses w.r.t. the interleaving and the local traces semantics by considering programs where atomic blocks are represented in this manner.

Critical sections allow encoding atomic blocks in a way that preserves safety, as has been noted in the literature \cite{Lomet77} and used for practical implementations of atomic blocks \cite{McCloskeyZGB06}.
A witness validator therefore may choose to take either view: analyzing the ghost program w.r.t. the semantics that considers atomic blocks or w.r.t. a semantics that encodes atomic blocks as critical sections.

\setlength{\intextsep}{0pt}
\setlength{\columnsep}{7pt}

\begin{wrapfloat}{listing}{r}{0.34\textwidth}
  \setlength{\abovecaptionskip}{0em}
  \setlength{\belowcaptionskip}{1em}
  \centering
  \begin{minted}[]{c}
|\codelock{$m_{\ghost}$}|;
  |\codelock{m}|;
  |$\ghost$| = 0;
|\codeunlock{$m_{\ghost}$}|;
    \end{minted}
    \caption{Example for the encoding of an atomic block as a critical section.
    }\label{listing:mutexes-example-program}
\end{wrapfloat}

We consider the procedure $\splitAtomic$, that takes programs $P \in \LanguageWithAtomic$ with atomic blocks and yields programs in $\LanguageWithMg$ where these are encoded as critical sections.
In the resulting program, all statements accessing a global variable $g$ have to be embedded into critical sections protected by the mutex $m_g$.
Additionally, each atomic block is encoded as a critical section protected by all mutexes of the form $m_g \in \MutexesForGlobals$, if $g$ is read or written in the atomic block.
These lock-operations on mutexes inserted there are performed following some total order on $\MutexesForGlobals$.
In \cref{listing:mutexes-example-program}, one of the atomic blocks of the introductory example from \cref{listing:running-example} is encoded via critical sections.
\begin{theorem}\label{theorem:equivalence-of-atomicity-encoding}
	Let $P \in \LanguageWithAtomic$ be a program. The program $P$ is safe with respect to the interleaving semantics if and only if $\splitAtomic\,P \in \LanguageWithMg$ is safe with respect to the interleaving semantics.
\end{theorem}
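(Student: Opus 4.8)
The plan is to prove both implications by exhibiting violation-preserving correspondences between interleavings of $P$ and of $\splitAtomic\,P$, in the spirit of the proof of \cref{theorem:equivalence-of-safety}. Write $m_g$ for the mutex guarding the global $g$, and call the block of consecutive steps that $\splitAtomic$ produces for one atomic block (resp.\ for one standalone global access) a \emph{critical segment}: executed by a single thread, it first acquires the relevant mutexes $m_g$ in the fixed total order on $\MutexesForGlobals$, then runs the original body $a_1;\dots;a_n$, then releases them in reverse order. The two directions are an \emph{expansion} map from $P$ to $\splitAtomic\,P$, which is routine, and a \emph{reduction} map from $\splitAtomic\,P$ to $P$, which is the substantial direction.

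For ``$P$ unsafe $\Rightarrow \splitAtomic\,P$ unsafe'', I would take an interleaving $\interleaving \in \semI{P}$ whose prefix ends in a violated assertion and replace each of its steps by the corresponding critical segment of $\splitAtomic\,P$, scheduled contiguously and by the same thread (steps that $\splitAtomic$ leaves untouched --- locks, unlocks, creates, guards, assertions, local updates --- are copied verbatim). The invariant carried along the induction is that at every segment boundary all mutexes $m_g$ are free, and that the reached state, after projecting away the $m_g$, coincides with the state reached in $\interleaving$. This makes each acquisition admissible, keeps the first body action $a_1$ admissible exactly when it was in $P$ (its guarding program mutex, if any, is in the same state), and leaves the effect on all non-$m_g$ variables and mutexes unchanged. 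A violated assertion --- whether standalone or the $k$-th statement of an atomic block --- is then reached with the same local state after the same prefix $a_1;\dots;a_{k-1}$, and hence fails in $\splitAtomic\,P$ as well.

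For the converse, I would start from $\interleaving \in \semI{\splitAtomic\,P}$ violating an assertion, first bring every critical segment into contiguous form by a commutativity argument, and then collapse each contiguous segment back into the single atomic step (or single global access) of $P$, dropping the $m_g$-operations and projecting the $m_g$ out of all states. The crucial observation, which relies on $\splitAtomic\,P \in \LanguageWithMg$, is that while a thread $t$ executes a critical segment it holds every $m_g$ for the globals touched therein; by the locking discipline of $\LanguageWithMg$ no other thread may read or write those globals, nor acquire or release those same $m_g$, during the window. Thus every foreign step falling inside the window touches only disjoint globals, its own locals, and other mutexes, and so commutes with each body action. Classifying actions as movers: the acquisitions are right movers, the releases are left movers, the (protected) body accesses together with all local updates, guards, assertions and creates are both movers, and the only possibly-locking body action $a_1$ is a right mover if it is a lock and a left mover otherwise. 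Since the acquisitions precede $a_1$ and the remaining body, each segment matches the pattern right-movers followed by left-movers (with both-movers on either side), so by the standard reduction argument the intervening foreign steps can be pushed before the first acquire or after the last release, giving an equivalent interleaving (with identical reached states) in which the segment is contiguous. Reducing segments one at a time gives an interleaving all of whose segments are contiguous; collapsing them yields $\interleaving' \in \semI{P}$, where the collapsed step realizes precisely the sequential composition demanded by the atomic-block semantics, and the violated assertion is reached with the same local state and so fails in $\interleaving'$ as well.

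The main obstacle is this reduction direction, and within it two points deserve care. First, the mover classification must be derived from the locking discipline rather than assumed: one must verify that the held $m_g$ genuinely exclude every conflicting access, and that when $a_1$ is a lock or unlock of a \emph{program} mutex it still moves in the required direction (a lock is always a right mover and an unlock always a left mover, so it fits immediately after the acquisitions). Second, reducing one segment permutes foreign steps and could in principle disturb another, not-yet-contiguous segment; I would resolve this by always gathering the leftmost non-contiguous segment and noting that the foreign steps moved out of it never lie inside an already-contiguous segment to its left, so the number of segments still to gather strictly decreases. The remaining bookkeeping --- admissibility of the reconstructed locks, agreement of states under the $m_g$-projection, and the exact index at which the assertion fails --- is routine once the commutation lemma is in place.
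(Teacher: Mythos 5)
Your proposal is correct, and on the expansion direction it coincides with the paper's own construction: in \cref{app:atomicity-encoding}, \cref{theorem:split} is proved via an inductively defined function $\extend$ that replaces each step of an interleaving of $P$ by a contiguous block $\code{lock}(m_{g_1});\dots;a;\dots;\code{unlock}(m_{g_1})$, maintaining exactly your invariant that no mutex of $\MutexesForGlobals$ is held at segment boundaries. The substantive difference lies in the reduction direction: the paper states it as a separate theorem and literally writes ``Without proof'', whereas you supply an actual argument --- a Lipton-style mover analysis in which the held $m_g$ exclude all conflicting accesses (this is where $\splitAtomic\,P \in \LanguageWithMg$ is used), acquisitions are right movers, releases are left movers, protected body actions are both movers, so foreign steps can be pushed out of each critical window and contiguous segments collapsed back into single atomic steps of $P$. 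This is the standard and correct way to discharge that claim, and it buys something real: it completes the half of \cref{theorem:equivalence-of-atomicity-encoding} on which \cref{theorem:equivalence-of-validity} also depends but which the paper leaves unsupported; your leftmost-segment-first bookkeeping for termination is also a genuine point the paper never confronts. Two small remarks for a full write-up: a violating interleaving of $\splitAtomic\,P$ may end mid-segment, so other threads can have incomplete critical segments at truncation --- since their un-released $m_g$ guarantee no other thread observed their partial effects, and the remaining body statements are always admissible, these can be completed or shown unobservable before collapsing; and your conservative classification of $a_1$ is harmless, since $a_1$ sits at the commit point of the pattern $R^{*}\,a_1\,B^{*}L^{*}$, where no mover property is required of it.
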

A proof-sketch for the direction that if $\splitAtomic\,P$ is safe w.r.t. the interleaving semantics, also $P$ is safe w.r.t. the interleaving semantics, can be found in \ifextendedversion \cref{app:atomicity-encoding}\else the extended version of the paper\cite{arxiv-extended}\fi.
We remark that such an encoding of atomic blocks via mutexes may introduce deadlocks which do not unduly restrict the set of reachable states.
\begin{toappendix}
\subsection{Violation Preservation of Encoding of Atomicity}\label{app:atomicity-encoding}

To precisely relate interleavings of the programs $P$ and $\splitAtomic\,P$, we introduce the function $\projectSplitAtomic$.
\begin{definition}
	Given a program $P \in \LanguageWithMg$ and an interleaving $\interleaving \in \semI{P}$.
	$\projectSplitAtomic\,i$ yields the interleaving that skips steps in $i$  that lock or unlock mutexes in $\MutexesForGlobals$ and the result states of these steps, and where occurrences of mutexes in $\MutexesForGlobals$ are removed from all states.
\end{definition}

\begin{theorem}\label{theorem:split}
	Given a program $P \in \LanguageWithAtomic$, and an interleaving $i \in \semI{P}$.
	There is an interleaving $i'$ for the program $\splitAtomic\,P$, such that the global state, the owners of mutexes and the local states of all threads of the state attained at every index in $i$ and $\projectSplitAtomic\,i'$ are the same.
\end{theorem}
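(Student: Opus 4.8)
The plan is to build $i'$ by induction on the length of $i$, simulating each single step of $i$ by a \emph{contiguous} block of steps of $\splitAtomic\,P$ that is scheduled without interleaving any other thread. Throughout I maintain the invariant that the last state of the constructed prefix of $i'$ agrees with the last state of the corresponding prefix of $i$ on the global variables, on the owners of all non-global mutexes, and on the local states of all threads, and that in addition every mutex in $\MutexesForGlobals$ is currently free. This last bookkeeping fact is exactly what makes the inserted lock operations admissible.

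For the step case I distinguish the shape of the next step $(u, a, v)$ of $i$ taken by a thread $t$. If $a$ neither accesses a global nor is an atomic block, I copy the single corresponding step into $i'$; admissibility transfers directly from $i$, since the guard condition, respectively the lock/unlock target, concerns only the local state of $t$ or a non-global mutex, on which the states agree by the invariant. If $a$ is a global read or write of $g$, I insert the three consecutive steps that lock $m_g$, perform the access, and unlock $m_g$; the lock is admissible because $m_g$ is free by the invariant, and the access reads the same value of $g$ as in $i$. If $a$ is an atomic block accessing the globals $g_1, \dots, g_k$, I insert, following the fixed total order on $\MutexesForGlobals$, the locks of $m_{g_1}, \dots, m_{g_k}$, then the body statements of the block as individual consecutive steps, then the matching unlocks in reverse order. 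Since the whole block is scheduled contiguously, all these mutexes are free when the locks are taken, and the body statements are precisely the global accesses held under these mutexes, so $i'$ respects the $\MutexesForGlobals$-discipline required by $\LanguageWithMg$. (When $k=0$ the body touches only local state, so contiguity is harmless; when $k \geq 1$ it is essential.)

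To re-establish the invariant after each case I observe that the inserted lock and unlock operations on mutexes in $\MutexesForGlobals$ leave the globals, the non-global mutexes, and all local states untouched, and restore every such mutex to free at the block boundary. For the atomic-block case, because no other thread runs between the locks and the unlocks, the sequential composition of the effects of $a_1; \dots; a_n$ computed in $i'$ is exactly the atomic effect that the interleaving semantics of $\LanguageWithAtomic$ assigns to the block in $i$, so the boundary states agree. Finally, $\projectSplitAtomic$ removes precisely the inserted lock/unlock steps on $\MutexesForGlobals$ together with their result states and erases these mutexes from the remaining states; under the resulting order-preserving correspondence each state of $i$ is matched, at a block boundary, by an equal state of $\projectSplitAtomic\,i'$, the only unmatched states being the intermediate ones interpolating an expanded atomic block.

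The main obstacle is the admissibility of the inserted global-mutex locks, and it is resolved exactly by the scheduling choice: by executing each simulating block atomically rather than interleaving it with other threads, no thread ever holds a mutex in $\MutexesForGlobals$ at a block boundary, so every inserted lock succeeds, while within a block a single thread acquires all the global mutexes it needs in the fixed order without any competitor holding one. This also explains why the encoding's lock ordering never deadlocks in the interleaving we construct: we need only exhibit one admissible schedule, and the contiguous scheduling avoids the deadlocks that the encoding may in principle introduce, consistent with the remark that they do not restrict the reachable states.
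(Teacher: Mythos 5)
Your proposal is correct and follows essentially the same route as the paper's proof, which defines an inductive function $\extend$ that prolongs the constructed interleaving by the contiguous block $\code{lock}(m_g); a; \code{unlock}(m_g)$ for plain global accesses, respectively $\code{lock}(m_{g_1});\dots;\code{lock}(m_{g_k}); a; \code{unlock}(m_{g_k});\dots;\code{unlock}(m_{g_1})$ in the fixed total order for atomic blocks, and relies on the same key invariant that no mutex in $\MutexesForGlobals$ is held at the end of any constructed prefix, so the inserted locks are always admissible. Your treatment is in fact slightly more careful than the paper's on two points --- you spell out that the atomic block's body is executed as individual consecutive steps (the paper writes it as a single action $a$) and you note explicitly that the index correspondence after $\projectSplitAtomic$ matches states at block boundaries --- but these are refinements of, not departures from, the paper's argument.
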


\label{sec:proof-split}
\begin{proof}
	For a given $P$ and a program $\splitAtomic\,P$, the function $\extend$ that, given $i$, constructs $i'$, is defined inductively. Given an interleaving consisting solely of the start state, it returns the interleaving consisting of the start state where the mutexes in $\MutexesForGlobals$ are added.
	For an interleaving $i$ consisting of $n+1$ steps, it computes the interleaving $\extend\, i'$, where $i'$ is the prefix of $i$ containing the first $n$ steps, and prolongs the result in the following manner.
	In case the last step taken in $i$ is a base statement $a$, and a global $g$ is accessed in $a$,
	the interleaving $\extend\, i'$ is prolonged by a sequence $\code{lock}(m_g); a; \code{unlock}(m_g)$, and the state is updated accordingly in the resulting steps.
	In case the last step taken in $i$ is an atomic statement $a'$, and the global variables being accessed in $a'$ are $g_1, \dots, g_k$, the interleaving $\extend\, i'$ is prolonged by a sequence $\code{lock}(m_{g_1});\dots; \code{lock}(m_{g_k}); a; \code{unlock}(m_{g_k});\dots; \code{unlock}(m_{g_1})$.
	Here, it is assumed without loss of generality, that the mutexes $m_{g_1}, \dots, m_{g_k}$ are in the total order defined for the mutexes in $\MutexesForGlobals$ in $\splitAtomic\,P$.
	For every interleaving returned by $\extend$, the set of mutexes in $\MutexesForGlobals$ that is held by any thread is empty at the end of the interleaving, ensuring that prolonging it with \code{lock}($m_g$) operations is possible.
\end{proof}

\begin{corollary}\label{cor:violation-preservation}
	Let $P \in \LanguageWithAtomic$ be a program.
	If there is an interleaving $i \in \semI{\splitAtomic\,P}$ where an assertion is violated at step $k$,
	there is a trace $i' \in \semI{P}$, such that the assertion is violated at the step with index $k$ of $\projectSplitAtomic\,i'$.
\end{corollary}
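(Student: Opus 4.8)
The plan is to transport the violation in the opposite direction to \cref{theorem:split}: starting from a violating interleaving $\interleaving \in \semI{\splitAtomic\,P}$ of the critical-section encoding, I would reconstruct a violating interleaving of the atomic-block program $P$ by a Lipton-style \emph{reduction} that re-atomizes each critical section, followed by the projection $\projectSplitAtomic$. The reduction is the genuinely new ingredient, because $\splitAtomic\,P$ admits interleavings in which foreign-thread steps occur between the first \lock{} and the last \unlock{} of a critical section, whereas in $P$ the corresponding atomic block runs without interruption. The key commutativity fact to isolate first is this: consider a critical section in $\interleaving$ encoding an atomic block of $P$ that accesses globals $g_1, \dots, g_j$; by construction it holds all of $m_{g_1}, \dots, m_{g_j} \in \MutexesForGlobals$ for its entire duration. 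Any step $\beta$ taken by another thread while these mutexes are held can neither lock nor unlock any $m_{g_\ell}$, nor access any $g_\ell$, since every access to $g_\ell$ in $\LanguageWithMg$ requires holding $m_{g_\ell}$. Hence $\beta$ touches only variables, mutexes, and local state disjoint from those read or written by the body statements and by the enclosing lock/unlock operations, so swapping $\beta$ with an adjacent critical-section step yields another consistent interleaving with identical states outside the swapped pair.

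Using this independence, I would repeatedly pick a foreign step lying strictly between the first lock and the last unlock of some atomic-block critical section and move it leftward past the adjacent critical-section step until it exits the section, placing it before the first lock. Each swap preserves consistency — admissibility, the lock order of every other mutex, the create order, and the reads-of-globals condition are all unaffected because the swapped steps are independent — and preserves every thread's program order, since foreign steps are only ever exchanged with steps of the thread owning the section. Taking as termination measure the number of (foreign-step, critical-section-step) inversions, which strictly decreases at each swap, I obtain an equivalent interleaving $\tilde{\interleaving} \in \semI{\splitAtomic\,P}$ in which every atomic-block critical section runs contiguously. No step is deleted and the local and global state seen by each individual step is preserved, so the assertion violated in $\interleaving$ is still violated, on the same state, in $\tilde{\interleaving}$.

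Finally I would apply $\projectSplitAtomic$ to $\tilde{\interleaving}$ to delete all lock/unlock steps on $\MutexesForGlobals$ and erase those mutexes from the states; the wrapped single global accesses of $P$ reappear as bare accesses, and each now-contiguous critical-section body — a maximal run of steps by one thread — collapses into a single atomic action, recovering the atomic block of $P$. This yields the trace $\interleaving'$ required by the statement. I would then check $\interleaving' \in \semI{P}$: the collapsed step's first action $a_1$ is admissible in the projected state, because its admissibility in $\splitAtomic\,P$ was independent of the $\MutexesForGlobals$-locks and the projected global, local, and non-$\MutexesForGlobals$-mutex state agree; the always-admissible remaining statements $a_2; \dots; a_n$ compose to the same net effect as the contiguous body; and all other steps transfer verbatim. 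Since the collapsed body is executed in the same order on the same state, the violated assertion — whether standalone or the $a_k$ of an atomic block, evaluated after executing $a_1; \dots; a_{k-1}$ — is violated at the corresponding step of $\projectSplitAtomic\,i'$, which supplies the claimed index $k$.

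I expect the reduction of the second step to be the main obstacle. One must argue independence carefully for every pair of step kinds — in particular that moving a foreign lock or unlock of some \emph{other} mutex across the section does not disturb that mutex's lock order — establish that the rewriting terminates, and verify that the violation, which may itself lie inside a critical section and be checked at an intermediate state of an atomic block, is faithfully preserved under all the reorderings. The first and third steps are then largely bookkeeping, mirroring the state-matching arguments already carried out for $\projectSplitAtomic$ and $\extend$ in the proof of \cref{theorem:split}.
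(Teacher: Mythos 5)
You proved the converse of the statement the paper actually establishes. The corollary as printed has its two semantics swapped by an evident typo: $\projectSplitAtomic\,i'$ is only defined for interleavings of a program in $\LanguageWithMg$, so it type-checks only when $i' \in \semI{\splitAtomic\,P}$, and the sentence immediately following the corollary glosses it as ``if there is any interleaving on a program using atomic blocks violating a given assertion, there is also a violating interleaving on the program where the atomic block is encoded as a critical section.'' Under that (only coherent) reading, the paper's proof is a one-liner from \cref{theorem:split}: given a violating interleaving $i \in \semI{P}$, the $\extend$ construction produces $i' \in \semI{\splitAtomic\,P}$ such that $\projectSplitAtomic\,i'$ agrees with $i$ state-by-state \emph{at every index}, so the violation at step $k$ transfers to index $k$ with no commutation argument whatsoever --- index preservation is exactly what the index-wise agreement of \cref{theorem:split} buys. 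The direction you prove (a violation in $\splitAtomic\,P$ yields one in $P$) is stated in the paper as a separate theorem in the same appendix section and is explicitly left without proof; so your Lipton-style reduction, whatever its merits, is not the paper's proof of this corollary and does not follow from \cref{theorem:split} the way the corollary is meant to.

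Moreover, even taken as a proof of that harder converse, your reduction has a concrete flaw: the leftward-only commutation fails. Your key independence claim --- that a foreign step inside a section touches nothing read or written by the body statements or the enclosing lock/unlock operations --- is false, because only $a_2, \dots, a_n$ of an atomic block are restricted to always-admissible statements; the first statement $a_1$ is \emph{arbitrary}, and in the paper's central use case it is precisely a lock or unlock of a program mutex $m \notin \MutexesForGlobals$, as in $\atomic\{\unlock{m};\ \ghost = 0\}$. In $\splitAtomic\,P$ this becomes $\code{lock($m_{\ghost}$)}; \unlock{m}; \ghost = 0; \code{unlock($m_{\ghost}$)}$, and between the $\unlock{m}$ and the closing $\code{unlock($m_{\ghost}$)}$ a foreign thread may execute $\lock{m}$: that step cannot be moved \emph{left} past the $\unlock{m}$, since it would be inadmissible there, so your rewriting gets stuck. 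The classical Lipton argument repairs this with two-directional movement (locks are right-movers, unlocks left-movers: foreign steps preceding $a_1$ exit left, those following it exit right), which you would need to adopt. Relatedly, your swaps and the final collapse permute step positions, so your closing claim that the violation reappears ``at the claimed index $k$'' does not follow from your construction; in the paper's direction it is immediate, which is another sign the intended proof lives on the other side of the equivalence.
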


According to \cref{cor:violation-preservation}, if there is any interleaving on a program using atomic blocks violating a given assertion, there is also a violating interleaving on the program where the atomic block is encoded as a critical section.
Thus, the encoding of a ghost program via critical sections can be used to soundly verify that there are no assertion violations in the ghost program, and therefore no assertion violations in the original program.

\begin{theorem}
	Let $P \in \LanguageWithAtomic$ be a program.
	If there is an interleaving $\interleaving' \in \semI{\splitAtomic\,P}$ where an assertion is violated in $\projectSplitAtomic\,i'$,
	there is an interleaving $\interleaving \in \semI{P}$, such that the same assertion is violated.
\end{theorem}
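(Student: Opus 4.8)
The plan is to invert the construction $\extend$ used in the proof of \cref{theorem:split}: starting from a violating interleaving $\interleaving' \in \semI{\splitAtomic\,P}$, I build a violating interleaving $\interleaving \in \semI{P}$ by first \emph{serializing} each critical section that $\splitAtomic$ introduces for an atomic block, so that it runs without interruption, and then \emph{collapsing} this contiguous section back into a single atomic step. Note that $\projectSplitAtomic\,\interleaving'$ itself is in general \emph{not} an interleaving of $P$: it merely deletes the $m_g$-lock and -unlock steps, leaving the internal statements $a_1; \dots; a_n$ of a block as separate, possibly interrupted steps. Repacking them into $\atomic\{a_1; \dots; a_n\}$ is exactly what the serialization enables.

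The argument rests on the protection invariant enforced by $\splitAtomic$: while a thread $t$ executes the encoding of an atomic block $\atomic\{a_1;\dots;a_n\}$, it holds \emph{all} mutexes $m_{g_1}, \dots, m_{g_k} \in \MutexesForGlobals$ for the globals $g_1, \dots, g_k$ accessed in the block, from the first $\lock{m_{g_1}}$ until the final $\unlock{m_{g_1}}$. Hence no other thread $t'$ can read or write any of $g_1, \dots, g_k$ during this window, since doing so would require holding one of the $m_{g_j}$. Therefore every step interleaved into the window by another thread operates on shared state disjoint from the effects of the internal always-admissible statements $a_2, \dots, a_n$, and commutes with them. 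In mover terminology \cite{Lomet77,McCloskeyZGB06}, the acquisitions $\lock{m_{g_1}}, \dots, \lock{m_{g_k}}$ are right movers, the releases together with the both-moving statements $a_2, \dots, a_n$ are left movers, and the arbitrary first statement $a_1$ serves as the single commit point; the window thus matches the reducible pattern $R^{*}\,N\,L^{*}$.

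Using this, I would repeatedly commute each foreign step out of every atomic-block window --- pushing steps before the commit point to the left of the opening $\lock{m_{g_1}}$ and steps after it to the right of the closing $\unlock{m_{g_1}}$ --- obtaining an interleaving $\interleaving''$ of $\splitAtomic\,P$ in which every such window is contiguous and which still violates the same assertion. Each commutation preserves admissibility of lock, unlock and guard actions, preserves the create order, and preserves the reads-of-globals condition (the latter precisely because the protected globals cannot change during the window). I would then collapse each contiguous window by dropping its $m_g$-lock/unlock steps and fusing $a_1; \dots; a_n$ into the single action $\atomic\{a_1;\dots;a_n\}$, and likewise strip the $m_g$-brackets around standalone global accesses. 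Since the sequential effect of the fused steps equals the effect of the atomic action and the $m_g$ mutexes are used only internally, the result is a consistent interleaving $\interleaving \in \semI{P}$, matching $\projectSplitAtomic$ followed by repacking.

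Finally, violation preservation is immediate once serialization is in place: an assertion tests only the local state of its thread $t$, which is determined by $t$'s own steps, and every global read $t$ performs inside a protected region returns the same value before and after the commutations, because no foreign write to a protected global is ever moved across the region. Hence the violated assertion of $\projectSplitAtomic\,\interleaving'$ is evaluated on the same local state in $\interleaving$ and fails there as well. I expect the main obstacle to be a rigorous proof of the serialization step: one must formalize the commutation lemma for each action kind, verify that moving foreign steps never breaks consistency (in particular the lock order and the reads-of-globals condition) nor the assertion evaluation, and argue that iterating the commutations terminates with all windows contiguous. The deadlock caveat noted after \cref{theorem:equivalence-of-atomicity-encoding} is harmless in this direction, since we start from an \emph{existing} interleaving of $\splitAtomic\,P$ and only reorder it.
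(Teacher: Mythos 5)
There is nothing in the paper to compare your argument against: the paper's own proof of this theorem literally reads ``Without proof.'' Your proposal therefore supplies an argument where the paper omits one, and it is the standard and, in my judgment, correct way to do it --- a Lipton-style reduction, consistent with what the paper itself gestures at by citing \cite{Lomet77,McCloskeyZGB06}. Your key observations are all sound: $\projectSplitAtomic\,\interleaving'$ is in general not an interleaving of $P$, so deletion of the $\MutexesForGlobals$ steps must be preceded by serialization; the locking discipline of $\splitAtomic$ guarantees that any foreign step inside a window touches state disjoint from the protected globals $g_1,\dots,g_k$ and hence commutes with the internal statements; the window matches the reducible pattern with the arbitrary $a_1$ as the unique commit point (correctly singled out, since $a_1$ may be a lock, guard, or create and need not commute); and since assertions in $\Language$ evaluate only the thread-local state, and every global read in $\splitAtomic\,P$ happens under the protecting mutex, the commutations change no value any thread reads, so the violation survives. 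Note also that thread ids are unaffected by your reorderings, since they depend only on the parent's id and creation count, both invariant under commuting steps of distinct threads.

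One case your sketch glosses over, beyond the rigor burden you already flag: windows that are still \emph{open} when the interleaving ends. The run is naturally truncated at the violated assertion, so other threads may be mid-window, and the violated assertion may itself be some $a_k$ strictly inside a block, leaving $t$'s own window incomplete. Commutation alone cannot make such a window both contiguous and collapsible into a single $\atomic$ step of $P$, because not all of its statements occur in $\interleaving'$. The fix is routine but must be said: a window that has passed its commit $a_1$ can be \emph{completed} at the end of the trace, since $a_2,\dots,a_n$ are always admissible and the unlocks are enabled for the holder (this also realigns the in-block assertion semantics, where $a_k$ is violated iff it is false after $a_1;\dots;a_{k-1}$, with what happened in $\interleaving'$); a pre-commit fragment consisting only of lock acquisitions must instead be \emph{deleted}, which is sound because no remaining step depends on the $\MutexesForGlobals$ entries it sets --- deleting it only enlarges the set of enabled actions. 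With that addition, and the per-action-kind commutation lemma together with a decreasing measure for termination (both of which you explicitly defer), your plan closes the gap the paper leaves open, and it dovetails with \cref{theorem:split} to give both directions of \cref{theorem:equivalence-of-atomicity-encoding}.
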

\begin{proof}
	Without proof.
\end{proof}



\end{toappendix}
Using this encoding, we can define the validity of a witness with respect to the local trace semantics.
\begin{definition}[Valid Witness w.r.t. Local Trace Semantics]\label{def:valid-witness-local-traces}
    A witness $W$ for a program $P \in \Language$, resulting in the witness-instrumented program $P_{\ghost}^W$, is \emph{valid w.r.t. the local trace semantics} if and only if the program  $\splitAtomic\,P_{\ghost}^W$ is safe w.r.t. the local trace semantics.
\end{definition}

Finally, we can relate this notion back to the validity of witnesses in the interleaving semantics.
With the
definition of a valid witness w.r.t. the local trace semantics referring back to safety and by
encoding atomicity in witness-instrumented programs via critical sections, we can lift the agreement of the two semantics on safety to an agreement on validity of ghost witnesses.
\begin{theorem}\label{theorem:equivalence-of-validity}
	Let $W$ be a witness for a program $P \in \Language$.
	The witness is valid with respect to the interleaving semantics if and only if it is valid with respect to the local trace semantics.
\end{theorem}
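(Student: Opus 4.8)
The plan is to reduce both sides of the biconditional to safety statements via the two validity definitions, and then to connect these safety statements by composing the two equivalence results established earlier in the paper. Concretely, I would first observe that, by \cref{def:valid-witness-interleavings}, $W$ is valid w.r.t.\ the interleaving semantics precisely when the witness-instrumented program $P_{\ghost}^W$ is safe w.r.t.\ the interleaving semantics, and that, by \cref{def:valid-witness-local-traces}, $W$ is valid w.r.t.\ the local trace semantics precisely when $\splitAtomic\,P_{\ghost}^W$ is safe w.r.t.\ the local trace semantics. Thus the statement to prove reduces to a single equivalence: $P_{\ghost}^W$ is safe w.r.t.\ the interleaving semantics if and only if $\splitAtomic\,P_{\ghost}^W$ is safe w.r.t.\ the local trace semantics.

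I would then establish this remaining equivalence as a two-link chain, which by its biconditional nature settles both directions of the theorem at once. Since the ghost instrumentation introduces atomic blocks, $P_{\ghost}^W \in \LanguageWithAtomic$, so \cref{theorem:equivalence-of-atomicity-encoding} applies and yields that $P_{\ghost}^W$ is safe w.r.t.\ the interleaving semantics if and only if $\splitAtomic\,P_{\ghost}^W$ is safe w.r.t.\ the interleaving semantics. Next, since $\splitAtomic$ produces a program in which every global access is guarded by the corresponding $m_g \in \MutexesForGlobals$, we have $\splitAtomic\,P_{\ghost}^W \in \LanguageWithMg$, so \cref{theorem:equivalence-of-safety} applies and yields that $\splitAtomic\,P_{\ghost}^W$ is safe w.r.t.\ the interleaving semantics if and only if it is safe w.r.t.\ the local trace semantics. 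Composing these two biconditionals gives exactly the remaining equivalence, completing the proof.

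The main point to verify carefully --- rather than a genuine obstacle --- is that the domain-membership side conditions of the two cited theorems are met, namely $P_{\ghost}^W \in \LanguageWithAtomic$ and $\splitAtomic\,P_{\ghost}^W \in \LanguageWithMg$, and that the assertions referred to by ``safety'' are the same objects across all four statements. The latter holds because neither $\ghostifyUpdate$/$\ghostifyInvariant$ nor $\splitAtomic$ removes or alters the boolean predicate of any original assertion: ghost instrumentation only folds original actions into atomic blocks (possibly adding new invariant assertions), and $\splitAtomic$ only inserts lock/unlock pairs around global accesses and atomic blocks. Hence the collection of assertions whose non-violation defines safety is preserved throughout the chain, so the equivalences compose without any mismatch in what is being asserted, and the theorem follows as a direct corollary of \cref{theorem:equivalence-of-atomicity-encoding} and \cref{theorem:equivalence-of-safety}.
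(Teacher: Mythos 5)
Your proposal is correct and follows essentially the same route as the paper's proof: reduce both validity notions to safety via \cref{def:valid-witness-interleavings} and \cref{def:valid-witness-local-traces}, then chain \cref{theorem:equivalence-of-atomicity-encoding} with \cref{theorem:equivalence-of-safety} applied to $\splitAtomic\,P_{\ghost}^W \in \LanguageWithMg$. Your extra care about the domain-membership side conditions and the preservation of the assertion set is a reasonable elaboration of steps the paper leaves implicit, not a different argument.
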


\begin{proof}
	By \cref{def:valid-witness-interleavings}, the witness $W$ is valid with respect to the interleaving semantics if and only if the witness-instrumented program $P_{\ghost}^W$ is safe.
	According to \cref{theorem:equivalence-of-atomicity-encoding}, the program $P_{\ghost}^W$ is safe with respect to the interleaving semantics if and only if the program $\splitAtomic\,P_{\ghost}^W$ is safe with respect to the interleaving semantics.
	As $\splitAtomic\,P_{\ghost}^W \in \LanguageWithMg$, according to \cref{theorem:equivalence-of-safety}, the program $\splitAtomic\,P_{\ghost}^W$ is safe with respect to the interleaving semantics if and only if it is safe with respect to the local trace semantics.
	With \cref{def:valid-witness-local-traces}, the statement follows.
\end{proof}

\section{Witness Generation with Thread-Modular Abstract Interpretation}
\label{sec:generation}
Here, we outline how the analysis results of two thread-modular analyses based on abstract interpretation can naturally be expressed using ghosts.
While both analyses~\cite{Schwarz2021,SchwarzSSEV23} are implemented in \Goblint and proven correct relative to the local trace semantics, their abstractions differ:
one computes relational invariants per mutex, while the other computes non-relational invariants per global.

The relational \emph{mutex-meet} analysis by \citet{SchwarzSSEV23} assumes that each global variable~$g$ is (write\nobreakdash-)protected by a set of mutexes $\mathcal{M}[g]$ which are held at every write to $g$.
For every mutex $m$, the analysis computes a \emph{mutex invariant}~$[m]$ containing relational information between those global variables $\mathcal{G}[m] = \{g \mid m \in \mathcal{M}[g]\}$ only written when $m$ is held.
The computed invariant holds whenever no thread holds the mutex~$m$, i.e., it may be violated while some thread has exclusive access, but holds again once $m$ is released.
To express mutex invariants, we introduce for each mutex $m$ a boolean ghost variable~$m_{\mathrm{locked}}$, indicating whether any thread has locked $m$.
The variable is initialized to $\mathsf{false}$ and corresponding ghost variable updates are added to the witness:
every lock of $m$ is instrumented with ${m_{\mathrm{locked}} = \mathsf{true}}$ and every unlock with ${m_{\mathrm{locked}} = \mathsf{false}}$.
This allows every mutex invariant $[m]$ to be added to the witness as
\(
    {\lnot m_{\mathrm{locked}} \implies [m]} 
\).
Listing \ref{listing:running-example-annotated} is an example obtained by such an instrumentation.

\Goblint additionally distinguishes two phases of the program: when it is single-threaded and when it has become multithreaded.
Thread-modular analysis are only used for the latter,
while, in the former phase, global variables are analyzed flow-sensitively to, e.g., retain precision during the initialization phase of the program.
We introduce another boolean ghost variable $\multithreaded$, indicating whether the program has become multithreaded.
It is initialized to $\mathsf{false}$ and all thread creation actions that may potentially create the first additional thread are instrumented with $\multithreaded = \mathsf{true}$.
Thus, every invariant~$[m]$ is actually added as
\(
    (\multithreaded \land \lnot m_{\mathrm{locked}}) \implies [m] 
\).

The non-relational \emph{protection} analysis \cite{Schwarz2021} computes for each global
variable~$g$ a \emph{protected invariant}~$[g]$ which describes values of $g$ when no thread holds a mutex from $\mathcal{M}[g]$.
Every protected invariant $[g]$ is added to the witness as
\(
    (\multithreaded \land \bigwedge_{m \in \mathcal{M}[g]} \lnot m_{\mathrm{locked}}) \implies [g]
\).

The analyses and their computed invariants are flow-insensitive on global variables, so the invariants should be valid at all program locations that fulfill certain conditions.
Repeating the flow-insensitive invariants at every location is
costly for validators; instead, we add flow-insensitive invariants after each thread create in the program.

\section{Ghost Witnesses for C Programs}\label{sec:format}
Here, we propose a uniform format for correctness witnesses with ghost variables (based on \cref{def:witness})
for concurrent C programs.
The format enables witness exchange between verifiers and witness validators.
Our format is compatible with an existing format~\cite{Ayaziova2024} for correctness witnesses of sequential programs
utilized by the community around the International Competition on Software Verification (SV-COMP)~\cite{Beyer2024}
and is realized in YAML~\cite{YamlSpec21}.

A witness in this format contains invariants and specifies at which locations of the C~program these invariants hold.
Rather than referring to a specific control-flow graph, locations are expressed by a line number and a column number in the source C file.
Corresponding to the elements of the tuple $(\GhostGlobalDeclarations, \GhostLocalVariables, U, I)$ from \cref{def:witness},
the format defines the following types of entries in the witness:
\begin{description}
	\item[\ttbox{ghost\_variables}] A ghost global variable declaration consists of a primitive C~type, a unique identifier, and an initial value given as a (side effect-free) C~expression over global variables of the program.
    \item[\ttbox{ghost\_updates}] A ghost update consists of an identifier (referring to a ghost variable declaration), a location at which the update is performed, and a (side effect-free) C~expression over program variables (which must be in scope at the location) and ghost variables.
	\item[\ttbox{location\_invariant}] A location invariant consists of a location and an invariant represented as a (side effect-free) C~expression over program variables (which must be in scope at the location) and ghost variables.
\end{description}
A more detailed description of these entries can be found in \ifextendedversion \cref{tab:structure-ghost-instrumentation-entry} in the appendix\else the extended version of the paper\cite{arxiv-extended}\fi.
We omit the definition of local ghost variables, as in C the reading from and writing to global variables does not require local variables as in our formalization.
A full YAML~schema for witnesses is available~\cite{GhostWitnessFormat24} and documented~\cite{GhostWitnessDocumentation24}.
\Cref{lst:loc-invariant,lst:ghost-decl-update} show two excerpts of a correctness witness for the program in \cref{e:running-example}, the full witness can also be found \ifextendedversion in the appendix (\cref{lst:witness-example})\else the extended version of the paper\cite{arxiv-extended}\fi.

\begin{toappendix}
\begin{table}[h]
  \centering
  \caption{Structure of a \ttbox{ghost\_instrumentation} witness entry composed of declarations for \ttbox{ghost\_variables} and \ttbox{ghost\_updates} to alter the values of the variables.}
  \renewcommand{\arraystretch}{1.05}
\setlength{\tabcolsep}{3.5pt}
\setlength{\multilen}{6.5cm}
\begin{tabular}{lll}
  \toprule
  Key & Value & Description \\
  \midrule
  \ttbox{ghost\_variables} & array                 & the list of \ttbox{ghost\_variable} items; see below \\
  \ttbox{ghost\_updates}   & array                 & the list of \ttbox{ghost\_update} items; see below \\
  \separator{content of \ttbox{ghost\_variable}} \\
  \ttbox{name}             & scalar                & the name of the ghost variable \\
  \ttbox{type}             & scalar                & the type of the ghost variable \\
  \ttbox{scope}            & \ttbox{global}        & the scope of the variable is always \ttbox{global} \\
  \ttbox{initial}          & mapping               & the initial value of the variable; see below \\
  \separator{content of \ttbox{initial}} \\
  \ttbox{value}            & scalar                & the initial value of the ghost variable \\
  \ttbox{format}           & \ttbox{c\_expression} & the format is always \ttbox{c\_expression} \\
  \separator{content of \ttbox{ghost\_update}} \\
  \ttbox{location}         & mapping               & the location of the update as structured in~\cite{Ayaziova2024} \\
  \ttbox{updates}          & array                 & the list of \ttbox{update} items; see below \\
  \separator{content of \ttbox{update}} \\
  \ttbox{variable}         & scalar                & the name of the ghost variable \\
  \ttbox{value}            & scalar                & the value assigned to the ghost variable \\
  \ttbox{format}           & \ttbox{c\_expression} & the format is always \ttbox{c\_expression} \\
  \bottomrule
\end{tabular}

  \label{tab:structure-ghost-instrumentation-entry}
\end{table}

\end{toappendix}

\begin{toappendix}
\begin{listing}[htbp]
  \begin{minted}{yaml}
- entry_type: invariant_set
  metadata: |\dots|
  content:
    - invariant:
        type: location_invariant
        location: { line: 4, |\dots| }
        value: |$\ghost$| == 0 |$\Longrightarrow$| used == 0
        format: c_expression
- entry_type: ghost_instrumentation
  metadata: |\dots|
  content:
    ghost_variables:
      - name: |$\ghost$|
        type: int
        scope: global
        initial:
          value: 0
          format: c_expression
    ghost_updates:
      - location: { line: 4, |\dots| }
        updates:
          - variable: |$\ghost$|
            value: 1
            format: c_expression
      - location: { line: 6, |\dots| }
        updates:
          - variable: |$\ghost$|
            value: 0
            format: c_expression
      - location: { line: 8, |\dots| }
        updates:
          - variable: |$\ghost$|
            value: 1
            format: c_expression
      - location: { line: 11, |\dots| }
        updates:
          - variable: |$\ghost$|
            value: 0
            format: c_expression
\end{minted}

  \caption{Witness with ghosts in the YAML-format for the program from \cref{listing:running-example}.}
  \label{lst:witness-example}
\end{listing}
\end{toappendix}

\setlength{\intextsep}{0pt}
\setlength{\columnsep}{0pt}
\begin{wrapfloat}{listing}{r}{0.47\textwidth}
\setlength{\abovecaptionskip}{0em}
\setlength{\belowcaptionskip}{0.5\baselineskip}
\centering
\begin{minted}[fontsize=\scriptsize]{yaml}
- entry_type: invariant_set
  metadata: |\dots|
  content:
    - invariant:
        type: location_invariant
        location: { line: 4, |\dots| }
        value: |$\ghost$| == 0 |$\Longrightarrow$| used == 0
        format: c_expression
\end{minted}
\caption{Location invariant.}
\label{lst:loc-invariant}
\end{wrapfloat}
Before defining \emph{validity} for these witnesses,
we briefly discuss the semantics of concurrent C.
Notably, the semantics of concurrent C~programs is not based on the interleaving of well-defined atomic steps as in our formal model;
the granularity of steps is implementation-defined.
For instance, depending on compiler and target platform,
a write of a \qty{64}{\bit} shared variable may be performed atomically,
or it may be split into two write accesses of \qty{32}{\bit}, or even more granularly.
Since actions of other threads may interleave arbitrarily with these accesses,
leading to unexpected results,
C forbids (non-atomic) concurrent accesses to shared variable,
and declares such \emph{data races} as undefined behaviour.
As we are concerned with proving correctness (wrt.\ reachability properties) on C~programs,
we only consider well-behaved C~programs without such data races.
Moreover, C allows atomic operations on shared data to specify different \emph{memory orders},
allowing for certain weak memory models to be used. 
We do not support these memory orders,
and always assume \emph{sequential consistency}~\cite{Lamport79}.

\begin{wrapfloat}{listing}{r}{0.47\textwidth}
\setlength{\abovecaptionskip}{0em}
\setlength{\belowcaptionskip}{0.5\baselineskip}
\vspace{-0.5\baselineskip}
\centering
\begin{minted}[fontsize=\scriptsize]{yaml}
- entry_type: ghost_instrumentation
  metadata: |\dots|
  content:
    ghost_variables:
      - name: |$\ghost$|
        type: int
        scope: global
        initial:
          value: 0
          format: c_expression
    ghost_updates:
      - location: { line: 4, |\dots| }
        updates:
          - variable: |$\ghost$|
            value: 1
            format: c_expression
\end{minted}
\caption{Ghost declaration and update.}
\label{lst:ghost-decl-update}
\end{wrapfloat}
The validity of our witnesses for C~programs is based on \cref{def:valid-witness-interleavings}.
The witness format for sequential programs by Ayaziov{\'{a}} et~al.~\cite{Ayaziova2024} specifies that
each \ttbox{location\_invariant} must hold immediately before executing
the statement or declaration it is attached to,
i.e., the statement or declaration at the specified location.
In concurrent programs we must additionally define which global states the evaluation of an invariant may observe.
In particular, as the granularity of write accesses by other threads to shared variables is unclear,
an invariant could potentially observe arbitrary intermediate states of such a write.
This is problematic for witnesses, as they cannot express useful information over the global state in that case.
To address this issue, we use the notion of \emph{sequence points} in C, a set of well-defined points in an execution where all side effects of the previous actions are guaranteed to have been performed and no side effects of subsequent actions have taken place~\cite{ISO:C99}.
For validity of a witness in our format, it suffices that every location invariant in the witness holds whenever one thread is at the specified location and every other thread is at a sequence point.
Additionally, each location invariant is evaluated atomically, i.e., no other thread can interleave while the location invariant is evaluated.

The effect of the declaration of a ghost variable $g$ with type $T$ and initial value $v$
corresponds to inserting a global declaration \code{T g = v;} in the C~program.
The declaration is executed
after the global variables of the program are declared.
As a result, the evaluation of $v$ can access
the initial value of the program's global variables.
The update of a ghost variable $g$ with the value $v$ has the same semantics as the C~statement \code{g = v;}.
As described in \cref{def:witness-instrumentation}, the update is executed atomically with the action at the given witness location.
Unlike the witnesses in the control flow graph as in \cref{def:witness}, we cannot in general attach the ghost update to an arbitrary statement, as we cannot assume that this statement is executed atomically (e.g., it may contain a nested function call or might execute other side-effects).
Therefore, we allow ghost updates only at locations that point to the beginning of one of the following statements.

\begin{toappendix}
\begin{table}[h]
\setlength\tabcolsep{1em}
\centering
\caption{Supported \code{pthread} functions for ghost updates with the corresponding location}
\label{tab:pthread-functions}
\begin{tabular}{l|l}
\toprule
Function name                  & Location of the ghost update     \\
\midrule
\code{pthread\_create}         & thread creation                  \\
\code{pthread\_mutex\_lock}    & locking of the mutex             \\
\code{pthread\_mutex\_unlock}  & unlocking of the mutex           \\
\code{pthread\_rwlock\_rdlock} & locking of the read-write lock   \\
\code{pthread\_rwlock\_wrlock} & locking of the read-write lock   \\
\code{pthread\_rwlock\_unlock} & unlocking of the read-write lock \\
\code{pthread\_cond\_wait}     & locking of the mutex             \\
\bottomrule
\end{tabular}
\end{table}
\end{toappendix}

\begin{itemize}
	\item
	If the statement is a call to a function from the pthread library (e.g. for thread-creation or locking of a mutex; the full list can be found in \cref{tab:pthread-functions}),
  then the ghost update is performed at the sequence point after evaluating the arguments of the function and together with the action leading to the next sequence point (e.g., locking of the mutex for \code{pthread\_mutex\_lock}, thread creation for \code{pthread\_create}).
	\item
	If the statement is an assignment, the ghost update is performed after the assignment, and it happens atomically with the actual write (i.e., only the write itself is performed atomically, not the evaluation of the expressions).

	Performing the write atomically may reduce the possible interleavings,
	but for programs without data races the behaviour is equivalent.
\end{itemize}

In the benchmarks used by SV-COMP, there are two additional functions that allow one to define an atomic block in C~\cite{SVCompRules24}.
The verifiers are instructed to assume that the code between \code{\_\_VERIFIER\_atomic\_begin} and \code{\_\_VERIFIER\_atomic\_end} is executed in one atomic step.
As we aim for our witnesses to be used by the SV-COMP community,
we also allow ghost updates at the call of these two functions.
For a call to \code{\_\_VERIFIER\_atomic\_begin} (resp. \code{\_\_VERIFIER\_atomic\_end}), the update is performed after (resp. before) the call to this additional function,
i.e., the update is performed atomically with the code between these calls.
\section{Validation with Software Model Checking}\label{sec:validation-with-model-checking}

Witnesses expressed in the format described in the previous section can be exchanged with, and validated by, other verification tools,
to confirm the results of the verification using tools with a significant technological and/or conceptual difference,
and thereby increase trust in the verification tool or potentially uncover bugs in the verification.
To demonstrate this, we implemented witness validation in the software model checking tool \UltimateGemCutter%
~\cite{GemCutterSVCOMP,GemCutterPLDI},
which is developed as part of the \Ultimate program analysis framework~\cite{ultimate}. 

\UltimateGemCutter is based on an interleaving semantics (\cref{sec:interleaving-semantics}).
It uses a \emph{commutativity relation} between statements to identify pairs of statements (of different threads) whose relative order does not affect program correctness.
In its analysis, \GemCutter groups interleavings into equivalence classes based on the induced \emph{Mazurkiewicz equivalence}~\cite{Mazurkiewicz86},
i.e., it considers interleavings as equivalent if they only differ in the order of commuting statements.
\GemCutter proves correctness for one representative per equivalence class,
using the \emph{trace abstraction} algorithm~\cite{Heizmann09TraceAbstraction}, a counterexample-guided abstraction refinement scheme.
If the proof for the representatives succeeds, \GemCutter soundly concludes correctness of the entire program.
To validate witnesses, \GemCutter instruments the given program with the ghost code and invariants from the witness (similar to \cref{def:witness-instrumentation}),
and applies its usual verification algorithm to the instrumented program.

\begin{table}[t]
	\centering
	\caption{Number of \emph{confirmed} witnesses. In \emph{witness confirmation} mode, \GemCutter checks validity of witness invariants while ignoring program correctness.}
	\label{tab:eval-validate}
		\setlength{\tabcolsep}{6pt}
	\begin{tabular}{lrrrr}
	    \toprule
		witnesses   & protection-$\ghost$ & mutexmeet-$\ghost$ & protection & mutexmeet \\ \midrule
		\multicolumn{5}{c}{witnesses for correct programs}                                          \\ \midrule
		total       &                 181 &                217 &              159 &             230 \\
		confirmed   &                 171 &                193 &              122 &             167 \\
		rejected    &                   0 &                  0 &                0 &               0 \\
		out of resources &              5 &                 19 &               34 &              56 \\ \midrule
		\multicolumn{5}{c}{witnesses for incorrect programs}                                        \\ \midrule
		total       &                 282 &                288 &              295 &             300 \\
		confirmed   &                 192 &                164 &              224 &             130 \\
		rejected    &                   0 &                  0 &                0 &               0 \\
		out of resources &             85 &                117 &               68 &             163 \\ \bottomrule
	\end{tabular}
\end{table}

\subsection{Experimental Setup}
Complementary to the theoretical equivalence results of \cref{theorem:equivalence-of-validity},
we empirically evaluate the impact of witness validation regarding the following questions:
\begin{description}
\item[Q1] Can witnesses generated by a verifier be validated by a different tool based on different semantics, analysis approaches, and technological foundations?
\item[Q2] What is the overhead of witness validation? Can witnesses help to accelerate the analysis, compared to verification without a witness?
\item[Q3] Can witness exchange and validation with tools based on a different approach help to find bugs in verification tools?
\end{description}

For our evaluation, we analysed the programs in the \textit{ConcurrencySafety-Main} category of SV-COMP~\cite{Beyer2024} with \Goblint and generated correctness witnesses.
For each of the two different analyses (\emph{mutex-meet} and \emph{protection}) described in \cref{sec:generation}, we generated witnesses with and without ghost variables,
yielding four witness benchmark sets.
The witnesses without ghosts simply contain location invariants within critical sections, immediately after lock operations.
The non-ghost invariants only refer to variables protected by mutexes held at their locations, implicitly using locations to encode concurrency information.
Thus, the number of invariants, their locations and the invariant expressions themselves are incomparable between the two kinds of witnesses.
We excluded witnesses that do not contain any (non-trivial) invariants.
Note that, as an abstract interpreter, \Goblint cannot prove incorrectness of programs, but is still able to generate valid invariants for incorrect programs (though they are, naturally, insufficient to prove correctness).
Hence, the generated witness benchmark sets include witnesses for incorrect programs, where the invariants reflect information that \Goblint has derived about the program, and can be checked by our \emph{witness confirmation} analysis.

We validated these witnesses using \GemCutter.
The validation was executed on an AMD Ryzen Threadripper 3970X at \qty{3.7}{\giga\hertz} with a time limit of \qty{900}{\second}, a memory limit of \qty{16}{\giga\byte} and a CPU core limit of \num{2}.
For reliable measurements, the experiments were carried out using the \textsc{BenchExec} framework~\cite{BenchExec}.
An artifact allowing reproduction of our experiments is available~\cite{artifact}.

\begin{table}[t]
	\centering
	\caption{Validated witnesses and required time.
	  The second line indicates verification results on those benchmarks where a witness in the corresponding witness set existed.
    }
	\newcommand\timehead{\begin{tabular}{@{}c@{}}time \\ h:m:s\end{tabular}}
	\label{tab:eval-verify}
		\setlength{\tabcolsep}{7pt}
	\begin{tabular}{lrrrrrrrr}
	    \toprule
		witnesses & \multicolumn{2}{c}{protection-$\ghost$} & \multicolumn{2}{c}{mutexmeet-$\ghost$} & \multicolumn{2}{c}{protection} & \multicolumn{2}{c}{mutexmeet} \\
		\cmidrule(lr){2-3} \cmidrule(lr){4-5} \cmidrule(lr){6-7} \cmidrule(lr){8-9}
		          &           \# &                \timehead &           \# &                \timehead &       \# &                \timehead &       \# &               \timehead  \\
	    \midrule
%
		validation &          119 &          \Duration{5493} &          123 &         \Duration{5367} &       100 &          \Duration{7289} &      110 &          \Duration{9432} \\
		verification &        120 &          \Duration{4047} &          131 &         \Duration{4709} &       101 &          \Duration{2674} &      118 &          \Duration{4063} \\
		\bottomrule
	\end{tabular}
\end{table}

\subsection{Results}
\cref{tab:eval-validate} shows the number of witnesses in each witness benchmark set, and how many of them are \emph{confirmed} resp.\ rejected by \GemCutter.
In the \emph{witness confirmation} mode used here, \GemCutter only checks if the asserts of the instrumented program that correspond to witness invariants can be violated,
while ignoring the asserts already present in the original program.

Despite the significant differences between \Goblint and \GemCutter, the communication via the witness format is successful (\textbf{Q1}).
The analysed witnesses contain non-trivial information:
on average, witnesses in the set \mbox{``protection-$\ghost$''} (resp.\ ``mutexmeet-$\ghost$'') contain $44.1$~invariant entries (resp.\ $43.9$ ) and $3.8$~ghost updates (resp.\ $3.6$),
whereas the witnesses in ``protection'' (resp.\ ``mutexmeet'') contain on average 179 (resp.\ 749.9) invariant entries.
For witnesses that are not confirmed, \GemCutter either runs out of time or memory (see \cref{tab:eval-validate}),
or crashes due to ghost updates at unsupported locations (12~cases)
or unsupported C~features in the program or witness (58~cases).

\Cref{tab:eval-verify} shows the results of running \GemCutter in full \emph{witness validation} mode (the mode used in SV-COMP),
where the witness invariants must be confirmed and correctness of the program must be proven.
As we are working with \emph{correctness} witnesses, we only consider correct programs for the validation.
For \Goblint witnesses validated by \GemCutter,
witness validation generally incurs an overhead rather than accelerating the verification~(\textbf{Q2}).
However, for almost all programs verified by \GemCutter, the corresponding witnesses can also be validated with a moderate slowdown.
In two cases, validation of a witness from the ``mutexmeet-$\ghost$'' resp.\ ``protection-$\ghost$'' set succeeds (in less than \qty{250}{\second}) while the verification of the corresponding program times out.

Over the duration of this work, its implementation and (re-)evaluation revealed 29~bugs in \Goblint and \GemCutter and lead to their fixing in most cases (\textbf{Q3}).
These bugs already existed prior to and independently of this work,
and crucially, they were revealed on a subset of the very same SV-COMP benchmarks that have been used over the years to test and evaluate these tools.
A full overview of these bugs is provided in \ifextendedversion \cref{sec:bugs}\else the extended version of the paper\cite{arxiv-extended}\fi.
In particular, four soundness-critical bugs were discovered in the verification performed by \Goblint, and 13 bugs in \Goblint's witness production (including missing or incorrect invariants).
In \Ultimate, five issues concerned the support for ACSL specifications (which is reused as part of the witness instrumentation), two bugs concerned the translation of C code to \Ultimate's internal representation, one bug concerned the internal representation, and one bug was specific to witness validation.
Notably, the bugs regarding ACSL support and translation of C~code are soundness-critical,
and cannot be caught by correctness checks inside \Ultimate that operate on the internal representation.
Thus, the witness exchange and validation significantly improved both tools.

\paragraph*{Threats to validity.}
We conducted experiments with \Goblint\ and \GemCutter, producing ghost witnesses in \Goblint\ (albeit for two different analysis approaches)
and validating them in \GemCutter.
While further experiments with more analyzers would offer additional evidence of the suitability of the format, \Goblint\ and \GemCutter\
employ radically different approaches and technologies; thus, we consider our experiments strong evidence that the format succeeds in
exchanging information across such divides.
We do not study the exchange of ghost witnesses in the opposite direction here.
Lastly, the evaluation was performed on tasks from the ConcurrencySafety-Main category of SV-COMP, implying the usual concerns about generalizability
to real-world programs of any experiment performed on SV-COMP benchmarks also apply here.

\begin{toappendix}
\label{sec:bugs}

Over the duration of this work, its implementation and (re-)evaluation revealed numerous bugs in the two tools and lead to their fixing in most cases.
These already existed prior to and independently from this work.
Thus, the collaboration and witness exchange improved both tools as a whole.

\newcommand{\goblintIssue}[1]{\href{https://github.com/goblint/analyzer/issues/#1}{issue \##1}}
\newcommand{\goblintCilIssue}[1]{\href{https://github.com/goblint/cil/issues/#1}{\textsc{CIL} issue \##1}}
\newcommand{\goblintPR}[1]{\href{https://github.com/goblint/analyzer/pull/#1}{PR \##1}}
\newcommand{\goblintCilPR}[1]{\href{https://github.com/goblint/cil/pull/#1}{\textsc{CIL} PR \##1}}
\newcommand{\UltimateCommit}[1]{\href{https://github.com/ultimate-pa/ultimate/commit/#1}{#1}}

\paragraph{\Goblint.}
Fixed:
\begin{itemize}
    \item Verification:
    \begin{enumerate}
        \item Unsound invariants from interval set domain widening (\goblintIssue{1356}, \goblintIssue{1473}, \goblintPR{1476}).
        \item Unsound relational analysis due to \texttt{extern} variables (\goblintIssue{1440}, \goblintPR{1444}).
        \item Unsound relational analysis due to \texttt{\_\_VERIFIER\_atomic} mutex (\goblintIssue{1440}, \goblintPR{1441}).
        \item \texttt{continue} handling in syntactic loop unrolling (\goblintPR{1369}).
    \end{enumerate}
    \item SV-COMP witness generation:
    \begin{enumerate}
        \item Logical expression intermediate locations (\goblintIssue{1356}, \goblintCilPR{166}).
        \item Multiple/composite declaration initializer intermediate locations (\goblintCilPR{167}, \goblintPR{1372}).
        \item \texttt{for} loop initializer \texttt{location\_invariant} location (\goblintCilPR{167}).
        \item \texttt{for} loop \texttt{loop\_invariant} location (\goblintIssue{1355}, \goblintPR{1372}).
        \item No \texttt{loop\_invariant} for \texttt{goto} loop (\goblintPR{1372}).
        \item \texttt{do}-\texttt{while} loop \texttt{loop\_invariant} location (\goblintPR{1372}).
        \item Exclude internal struct names from generated invariants (\goblintPR{1375}).
        \item Improve invariants for syntactically unrolled loops (\goblintPR{1403}).
        \item Exclude trivial \texttt{\_Bool} invariants (\goblintIssue{1356}, \goblintPR{1436}).
        \item Exclude trivial congruence invariants (\goblintIssue{1218}).
        \item Exclude invariants with out-of-scope local variables (\goblintIssue{1361}, \goblintPR{1362}).
        \item Mutex-meet invariants do not account for initial values (\goblintIssue{1356}).
        \item Exclude \Goblint stubs from witnesses (\goblintPR{1334}).
    \end{enumerate}
\end{itemize}
Not fixed:
\begin{enumerate}
    \item Witness invariant parser does not handle \texttt{typedef} (\goblintCilIssue{159}).
    \item Autotuner enables all integer domains (\goblintIssue{1472}).
    \item Cannot have \texttt{location\_invariant} before loop (\goblintIssue{1391}).
\end{enumerate}

\paragraph{\Ultimate.}
Fixed:
\begin{enumerate}
    \item (Validation) Location invariants at labels were not handled properly (\UltimateCommit{d55e39c})
    \item (C-Translation) Model return value of \texttt{\_\_VERIFIER\_nondet\_bool} properly (\UltimateCommit{f7d84c9})
    \item (ACSL) Enhance grammar for variable names (\UltimateCommit{b9e0ff7}, \UltimateCommit{ab2e0ac})
    \item (ACSL) Support \texttt{\_Bool} type (\UltimateCommit{9435525}, \UltimateCommit{cb09d65})
    \item (ACSL) Support for \texttt{\&}(\UltimateCommit{ed2a5ba})
    \item (ACSL) Fix \texttt{->} translation (\UltimateCommit{6f5224f})
    \item (CFG) Fix support for nested atomic blocks (\UltimateCommit{a38a16e})
\end{enumerate}

Not fixed:
\begin{enumerate}
    \item (C-Translation) Missing support for \texttt{pthread}-attributes
    \item (ACSL) Missing support for function pointers
\end{enumerate}
\end{toappendix}
\section{Related Work}\label{sec:related-work}


The exchange of information between analyzers through an analysis-agnostic exchange format
is actively researched and has led to the development of cooperative verification~\cite{CoVeriTeam2022,HaltermannWehrheim2022}.
Witnesses are central to SV-COMP with many tools specifically developing
witness generation and validation approaches~\cite{saan_correctness_2024,BeyerF20,SvejdaBK20,MilaneseM24}.
The first generation relied on automata exported as GraphML~\cite{Beyer2015,Beyer2016,BeyerF20},
which has been deprecated in favor of a YAML-based format~\cite{Ayaziova2024}.
There are other means of exchanging analysis information that focus on external results,
such as SARIF (Static Analysis Results Interchange Format)~\cite{OASISSarif2020},
or conversely, work within a given approach, such as abstract interpretation~\cite{Albert_2005} or model checking~\cite{GurfinkelC03,MeolicFG04}.
The goal of our witnesses, however, is to expose  reasoning about concurrent programs
that can be transferred across the semantic divide.

Witness exchange aims to increase trust in the analysis results, and can effectively reveal
latent bugs in the analyzers.
Making program verification tools more dependable is an active area of research.
Novel testing methods have been developed specifically for testing analyzers~\cite{casso_testing_2021,klinger_differentially_2019,Kaindlstorfer2024,Fleischmann2024}.
Ideally, we would like to formally prove analyzers correct, and there has been
significant progress in this direction~\cite{jourdan_verasco_2015,cachera_certified_2010,franceschino_verified_2021,becker_ghost_2020,correnson_engineering_2023,de_vilhena_spy_2020,Stade2024};
however, none of these target the full range of advanced techniques needed for efficient analysis of real-world concurrent programs,
which are still under active research and development.

Ghost variables are crucial for the (relative) completeness of the proof systems of \citet{OwickiG76} and \citet{Lamport77}.
The usefulness of these and similar approaches was shown in case studies~\cite{Gries77,Dijkstra78b,Lubachevsky84,SemenyukD23}.
\citet{NipkowN99} formalize the Owicki-Gries method in Isabelle/HOL
and extend the method to parametric programs~\cite{Nieto01,Nieto02}.
Recent work extend the Owicki-Gries method to weak memory~\cite{LahavV15,DalvandiDDW19,DalvandiDDW22,WrightDBD23}
and persistent memory~\cite{RaadLV20,BilaDLRW22}.

We showed that our ghost instrumentation preserve the safety of the original program.
\citet{Filliatre2016} present an ML-style programming language and ensure non-interference via the type system,
allowing a bisimulation proof of ghost code erasure.
For concurrent code, \citet{Zhang2012} present a structural approach to establish erasure and \citet{Schmaltz2012} gives
bisimulation proofs for a significant portion of low-level C with ghost updates.

Our approach to generating ghost witnesses is reminiscent of a form of resource invariant synthesis~\cite{Clarke79}.
There are other approaches worth exploring, such as generating ghosts for counting proofs~\citet{FarzanKP14}
or inferring conditional history variables from counter-examples~\cite{VickM23}.
\citet{HoenickeMP17} explore how thread-modular reasoning can avoid ghosts by introducing proof systems
that consider interleavings of up to $k$ threads.
\citet{FarzanKP24} present an approach to simplifying proofs, reducing the complexity of the ghost state.

Thread-local concrete semantics have been used to justify thread-modular data-flow analysis~\cite{Mukherjee2017}
and concurrent separation logic~\cite{Brookes07,Brookes14,Krebbers0BJDB17,Ley-WildN13,NanevskiLSD14,SergeyNB15}.
For abstract interpretation, \citet{Mine2012} presents a thread-modular abstract interpretation framework as layers of abstractions of the interleaving semantics,
and in subsequent work~\cite{Mine2014,Mine2017a}, the interleaving semantics is encoded
into a local semantics by using auxiliary variables to track the control location of other threads.
In contrast, we have shown equivalence for a purely thread-local trace semantics where threads only
learn of other thread actions through synchronizing events.
\section{Conclusion}\label{sec:conclusion}
We have introduced a witness format for communicating information about concurrent programs between static analysis tools ---
bridging the divide between interleaving and thread-modular views of the concrete semantics.
We build on the notion of ghost variables, which is well-established for arguing about the correctness of concurrent programs.
The ghost witness can be easily encoded via a program transformation.
This way, the notion of a \emph{valid} witness is independent of the semantic view, which is crucial for an exchange format.
We extend an existing and established format to minimize the technical burden for other tools to support ghost witnesses.
For two types of thread-modular analyses implemented in the \Goblint\ abstract interpreter, we have shown how the invariants computed can naturally be expressed in the new format, highlighting that its expressiveness is useful for existing analyses.
A validation of the witnesses that  \Goblint\ generated for a set of concurrent SV-COMP tasks showed that most could be validated by the model checker \GemCutter\ building on an interleaving view of concurrency.
Altogether, the notion of ghost witnesses can be leveraged as a building block for exchanging more expressive invariants of concurrent programs between static analysis tools.
Future work may study how the results of other static analyzers for concurrent programs are best encoded via ghost code and how to cleverly exploit this information in thread-modular validators.
Additionally, it would be interesting to explore the applicability of ghosts in violation witnesses for concurrent programs.

\begin{credits}

\ifextendedversion
\subsubsection*{Data Availability Statement.}
A virtual machine that allows for reproduction of our experiments is available on Zenodo~\cite{artifact}.
The machine contains the source code and binaries of \Goblint and \UltimateGemCutter, the benchmark programs, and the detailed evaluation results underlying \cref{sec:validation-with-model-checking}.
\else
\fi
\subsubsection{\ackname}
This work was supported in part by the Deutsche Forschungsgemeinschaft (DFG) under project numbers 503812980 and 378803395/2428 \textsc{ConVeY}, the Shota Rustaveli National Science Foundation of Georgia (FR-21-7973),
and the European Union and the Estonian Research Council via project TEM-TA119.

\end{credits}

\bibliographystyle{splncs04nat}
\bibliography{bib/lit,bib/ghosts}

\begin{thebibliography}{82}
\providecommand{\natexlab}[1]{#1}
\providecommand{\url}[1]{\texttt{#1}}
\providecommand{\urlprefix}{URL }
\expandafter\ifx\csname urlstyle\endcsname\relax
  \providecommand{\doi}[1]{doi:\discretionary{}{}{}#1}\else
  \providecommand{\doi}{doi:\discretionary{}{}{}\begingroup
  \urlstyle{rm}\Url}\fi

\bibitem[{Albert et~al.(2004)Albert, Puebla, and Hermenegildo}]{Albert_2005}
Albert, E., Puebla, G., Hermenegildo, M.V.: Abstraction-carrying code. In:
  Baader, F., Voronkov, A. (eds.) Logic for Programming, Artificial
  Intelligence, and Reasoning, 11th International Conference, {LPAR} 2004,
  Montevideo, Uruguay, March 14-18, 2005, Proceedings, Lecture Notes in
  Computer Science, vol. 3452, pp. 380--397, Springer (2004),
  \doi{10.1007/978-3-540-32275-7\_25}

\bibitem[{Apt et~al.(2009)Apt, de~Boer, and Olderog}]{AptBO09}
Apt, K.R., de~Boer, F.S., Olderog, E.: Verification of Sequential and
  Concurrent Programs. Texts in Computer Science, Springer (2009), ISBN
  978-1-84882-744-8, \doi{10.1007/978-1-84882-745-5}

\bibitem[{Ayaziov{\'{a}} et~al.(2024)Ayaziov{\'{a}}, Beyer, Lingsch-Rosenfeld,
  Spiessl, and Strej{\v{c}}ek}]{Ayaziova2024}
Ayaziov{\'{a}}, P., Beyer, D., Lingsch-Rosenfeld, M., Spiessl, M.,
  Strej{\v{c}}ek, J.: Software verification witnesses 2.0. In: Model Checking
  Software, Lecture Notes in Computer Science, vol. 14624, pp. 184--203,
  Springer (2024), \doi{10.1007/978-3-031-66149-5_11}

\bibitem[{Becker and March{\'{e}}(2019)}]{becker_ghost_2020}
Becker, B.F.H., March{\'{e}}, C.: Ghost code in action: Automated verification
  of a symbolic interpreter. In: Chakraborty, S., Navas, J.A. (eds.) Verified
  Software. Theories, Tools, and Experiments - 11th International Conference,
  {VSTTE} 2019, New York City, NY, USA, July 13-14, 2019, Revised Selected
  Papers, Lecture Notes in Computer Science, vol. 12031, pp. 107--123, Springer
  (2019), \doi{10.1007/978-3-030-41600-3\_8}

\bibitem[{Beyer(2023)}]{Beyer2023}
Beyer, D.: Competition on software verification and witness validation:
  {SV}-{COMP}~2023. In: Tools and Algorithms for the Construction and Analysis
  of Systems, pp. 495--522, Springer Nature Switzerland (2023),
  \doi{10.1007/978-3-031-30820-8_29}

\bibitem[{Beyer(2024{\natexlab{a}})}]{Beyer2024}
Beyer, D.: State of the art in software verification and witness validation:
  {SV-COMP} 2024. In: Finkbeiner, B., Kov{\'{a}}cs, L. (eds.) Tools and
  Algorithms for the Construction and Analysis of Systems - 30th International
  Conference, {TACAS} 2024, Luxembourg City, Luxembourg, April 6-11, 2024,
  Proceedings, Part {III}, Lecture Notes in Computer Science, vol. 14572, pp.
  299--329, Springer (2024{\natexlab{a}}), \doi{10.1007/978-3-031-57256-2\_15}

\bibitem[{Beyer(2024{\natexlab{b}})}]{SVCompRules24}
Beyer, D.: {SV-COMP} 2024 - 13th competition on software verification.
  \url{https://sv-comp.sosy-lab.org/2024/rules.php} (2024{\natexlab{b}}),
  accessed: 2024-09-29

\bibitem[{Beyer et~al.(2016)Beyer, Dangl, Dietsch, and Heizmann}]{Beyer2016}
Beyer, D., Dangl, M., Dietsch, D., Heizmann, M.: Correctness witnesses:
  exchanging verification results between verifiers. In: Zimmermann, T.,
  Cleland{-}Huang, J., Su, Z. (eds.) Proceedings of the 24th {ACM} {SIGSOFT}
  International Symposium on Foundations of Software Engineering, {FSE} 2016,
  Seattle, WA, USA, November 13-18, 2016, pp. 326--337, {ACM} (2016),
  \doi{10.1145/2950290.2950351}

\bibitem[{Beyer et~al.(2015)Beyer, Dangl, Dietsch, Heizmann, and
  Stahlbauer}]{Beyer2015}
Beyer, D., Dangl, M., Dietsch, D., Heizmann, M., Stahlbauer, A.: Witness
  validation and stepwise testification across software verifiers. In: Nitto,
  E.D., Harman, M., Heymans, P. (eds.) Proceedings of the 2015 10th Joint
  Meeting on Foundations of Software Engineering, {ESEC/FSE} 2015, Bergamo,
  Italy, August 30 - September 4, 2015, pp. 721--733, {ACM} (2015),
  \doi{10.1145/2786805.2786867}

\bibitem[{Beyer and Friedberger(2020)}]{BeyerF20}
Beyer, D., Friedberger, K.: Violation witnesses and result validation for
  multi-threaded programs - implementation and evaluation with {CPAchecker}.
  In: Margaria, T., Steffen, B. (eds.) Leveraging Applications of Formal
  Methods, Verification and Validation: Verification Principles - 9th
  International Symposium on Leveraging Applications of Formal Methods, ISoLA
  2020, Rhodes, Greece, October 20-30, 2020, Proceedings, Part {I}, Lecture
  Notes in Computer Science, vol. 12476, pp. 449--470, Springer (2020),
  \doi{10.1007/978-3-030-61362-4\_26}

\bibitem[{Beyer and Kanav(2022)}]{CoVeriTeam2022}
Beyer, D., Kanav, S.: {CoVeriTeam}: {On}-{Demand} {Composition} of
  {Cooperative} {Verification} {Systems}. In: Fisman, D., Rosu, G. (eds.) Tools
  and {Algorithms} for the {Construction} and {Analysis} of {Systems}, pp.
  561--579, Lecture {Notes} in {Computer} {Science}, Springer International
  Publishing (2022), \doi{10.1007/978-3-030-99524-9_31}

\bibitem[{Beyer et~al.(2017)Beyer, Löwe, and Wendler}]{BenchExec}
Beyer, D., Löwe, S., Wendler, P.: Reliable benchmarking: requirements and
  solutions. International Journal on Software Tools for Technology Transfer
  \textbf{21}(1), 1--29 (nov 2017), \doi{10.1007/s10009-017-0469-y}

\bibitem[{Bila et~al.(2022)Bila, Dongol, Lahav, Raad, and
  Wickerson}]{BilaDLRW22}
Bila, E.V., Dongol, B., Lahav, O., Raad, A., Wickerson, J.: View-based
  {Owicki}-{Gries} reasoning for persistent {x86-TSO}. In: Sergey, I. (ed.)
  Programming Languages and Systems - 31st European Symposium on Programming,
  {ESOP} 2022, Munich, Germany, April 2-7, 2022, Proceedings, Lecture Notes in
  Computer Science, vol. 13240, pp. 234--261, Springer (2022),
  \doi{10.1007/978-3-030-99336-8_9}

\bibitem[{Brookes(2007)}]{Brookes07}
Brookes, S.: A semantics for concurrent separation logic. Theor. Comput. Sci.
  \textbf{375}(1-3), 227--270 (2007), \doi{10.1016/J.TCS.2006.12.034}

\bibitem[{Brookes(2014)}]{Brookes14}
Brookes, S.: On grainless footprint semantics for shared-memory programs. In:
  Jacobs, B., Silva, A., Staton, S. (eds.) Proceedings of the 30th Conference
  on the Mathematical Foundations of Programming Semantics, {MFPS} 2014,
  Ithaca, NY, USA, June 12-15, 2014, Electronic Notes in Theoretical Computer
  Science, vol. 308, pp. 65--86, Elsevier (2014),
  \doi{10.1016/J.ENTCS.2014.10.005}

\bibitem[{Cachera and Pichardie(2010)}]{cachera_certified_2010}
Cachera, D., Pichardie, D.: A certified denotational abstract interpreter. In:
  Kaufmann, M., Paulson, L.C. (eds.) Interactive Theorem Proving, First
  International Conference, {ITP} 2010, Edinburgh, UK, July 11-14, 2010.
  Proceedings, Lecture Notes in Computer Science, vol. 6172, pp. 9--24,
  Springer (2010), \doi{10.1007/978-3-642-14052-5\_3}

\bibitem[{Casso et~al.(2020)Casso, Morales, L{\'{o}}pez{-}Garc{\'{\i}}a, and
  Hermenegildo}]{casso_testing_2021}
Casso, I., Morales, J.F., L{\'{o}}pez{-}Garc{\'{\i}}a, P., Hermenegildo, M.V.:
  Testing your (static analysis) truths. In: Fern{\'{a}}ndez, M. (ed.)
  Logic-Based Program Synthesis and Transformation - 30th International
  Symposium, {LOPSTR} 2020, Bologna, Italy, September 7-9, 2020, Proceedings,
  Lecture Notes in Computer Science, vol. 12561, pp. 271--292, Springer (2020),
  \doi{10.1007/978-3-030-68446-4\_14}

\bibitem[{Clarke(1979)}]{Clarke79}
Clarke, E.M.: Synthesis of resource invariants for concurrent programs. In:
  Aho, A.V., Zilles, S.N., Rosen, B.K. (eds.) Conference Record of the Sixth
  Annual {ACM} Symposium on Principles of Programming Languages, San Antonio,
  Texas, USA, January 1979, pp. 211--221, {ACM} Press (1979),
  \doi{10.1145/567752.567772}

\bibitem[{Correnson and Steinh{\"{o}}fel(2023)}]{correnson_engineering_2023}
Correnson, A., Steinh{\"{o}}fel, D.: Engineering a formally verified automated
  bug finder. In: Chandra, S., Blincoe, K., Tonella, P. (eds.) Proceedings of
  the 31st {ACM} Joint European Software Engineering Conference and Symposium
  on the Foundations of Software Engineering, {ESEC/FSE} 2023, San Francisco,
  CA, USA, December 3-9, 2023, pp. 1165--1176, {ACM} (2023),
  \doi{10.1145/3611643.3616290}

\bibitem[{Dalvandi et~al.(2020)Dalvandi, Doherty, Dongol, and
  Wehrheim}]{DalvandiDDW19}
Dalvandi, S., Doherty, S., Dongol, B., Wehrheim, H.: {Owicki}-{Gries} reasoning
  for {C11} {RAR}. In: Hirschfeld, R., Pape, T. (eds.) 34th European Conference
  on Object-Oriented Programming, {ECOOP} 2020, November 15-17, 2020, Berlin,
  Germany (Virtual Conference), LIPIcs, vol. 166, pp. 11:1--11:26, Schloss
  Dagstuhl - Leibniz-Zentrum f{\"{u}}r Informatik (2020),
  \doi{10.4230/LIPICS.ECOOP.2020.11}

\bibitem[{Dalvandi et~al.(2022)Dalvandi, Dongol, Doherty, and
  Wehrheim}]{DalvandiDDW22}
Dalvandi, S., Dongol, B., Doherty, S., Wehrheim, H.: Integrating
  {Owicki}-{Gries} for {C11}-style memory models into {Isabelle}/{HOL}. J.
  Autom. Reason. \textbf{66}(1), 141--171 (2022),
  \doi{10.1007/S10817-021-09610-2}

\bibitem[{Dijkstra(1978)}]{Dijkstra78b}
Dijkstra, E.W.: Finding the correctness proof of a concurrent program. In:
  Bauer, F.L., Broy, M. (eds.) Program Construction, International Summer
  School, July 26 - August 6, 1978, Marktoberdorf, Germany, Lecture Notes in
  Computer Science, vol.~69, pp. 24--34, Springer (1978),
  \doi{10.1007/BFB0014652}

\bibitem[{Erhard et~al.(2024{\natexlab{a}})Erhard, Bentele, Heizmann, Klumpp,
  Saan, Schüssele, Schwarz, Seidl, Tilscher, and Vojdani}]{artifact}
Erhard, J., Bentele, M., Heizmann, M., Klumpp, D., Saan, S., Schüssele, F.,
  Schwarz, M., Seidl, H., Tilscher, S., Vojdani, V.: Artifact for the
  {VMCAI'2025} paper {``Correctness Witnesses for Concurrent Programs: Bridging
  the Semantic Divide with Ghosts''} (Oct 2024{\natexlab{a}}),
  \doi{10.5281/zenodo.13863579}

\bibitem[{Erhard et~al.(2024{\natexlab{b}})Erhard, Bentele, Heizmann, Klumpp,
  Saan, Schüssele, Schwarz, Seidl, Tilscher, and
  Vojdani}]{GhostWitnessFormat24}
Erhard, J., Bentele, M., Heizmann, M., Klumpp, D., Saan, S., Schüssele, F.,
  Schwarz, M., Seidl, H., Tilscher, S., Vojdani, V.: Correctness witness
  schema.
  \url{https://ultimate-pa.github.io/concurrency-witnesses/correctness-witness-schema.yml}
  (2024{\natexlab{b}}), accessed: 2024-09-29

\bibitem[{Erhard et~al.(2024{\natexlab{c}})Erhard, Bentele, Heizmann, Klumpp,
  Saan, Schüssele, Schwarz, Seidl, Tilscher, and
  Vojdani}]{GhostWitnessDocumentation24}
Erhard, J., Bentele, M., Heizmann, M., Klumpp, D., Saan, S., Schüssele, F.,
  Schwarz, M., Seidl, H., Tilscher, S., Vojdani, V.: Format for correctness
  witnesses, version 2.1.
  \url{ultimate-pa.github.io/concurrency-witnesses/index.html}
  (2024{\natexlab{c}}), accessed: 2024-09-29

\bibitem[{Evans et~al.(2021)Evans, Ben-Kiki, döt Net, Müller, Antoniou, Aro,
  and Smith}]{YamlSpec21}
Evans, C., Ben-Kiki, O., döt Net, I., Müller, T., Antoniou, P., Aro, E.,
  Smith, T.: {YAML Ain't Markup Language (YAML™) Version 1.2}.
  \url{https://yaml.org/spec/1.2.2/} (2021)

\bibitem[{Farzan et~al.(2014)Farzan, Kincaid, and Podelski}]{FarzanKP14}
Farzan, A., Kincaid, Z., Podelski, A.: Proofs that count. In: Jagannathan, S.,
  Sewell, P. (eds.) The 41st Annual {ACM} {SIGPLAN-SIGACT} Symposium on
  Principles of Programming Languages, {POPL} '14, San Diego, CA, USA, January
  20-21, 2014, pp. 151--164, {ACM} (2014), \doi{10.1145/2535838.2535885}

\bibitem[{Farzan et~al.(2022)Farzan, Klumpp, and Podelski}]{GemCutterPLDI}
Farzan, A., Klumpp, D., Podelski, A.: Sound sequentialization for concurrent
  program verification. In: Jhala, R., Dillig, I. (eds.) {PLDI} '22: 43rd {ACM}
  {SIGPLAN} International Conference on Programming Language Design and
  Implementation, San Diego, CA, USA, June 13 - 17, 2022, pp. 506--521, {ACM}
  (2022), \doi{10.1145/3519939.3523727}

\bibitem[{Farzan et~al.(2024)Farzan, Klumpp, and Podelski}]{FarzanKP24}
Farzan, A., Klumpp, D., Podelski, A.: Commutativity simplifies proofs of
  parameterized programs. Proc. {ACM} Program. Lang. \textbf{8}({POPL}),
  2485--2513 (2024), \doi{10.1145/3632925}

\bibitem[{Filli{\^{a}}tre et~al.(2016)Filli{\^{a}}tre, Gondelman, and
  Paskevich}]{Filliatre2016}
Filli{\^{a}}tre, J., Gondelman, L., Paskevich, A.: The spirit of ghost code.
  Formal Methods Syst. Des. \textbf{48}(3), 152--174 (2016),
  \doi{10.1007/S10703-016-0243-X}

\bibitem[{Fleischmann et~al.(2024)Fleischmann, Kaindlstorfer, Isychev,
  Wüstholz, and Christakis}]{Fleischmann2024}
Fleischmann, M., Kaindlstorfer, D., Isychev, A., Wüstholz, V., Christakis, M.:
  Constraint-based test oracles for program analyzers. In: Proceedings of the
  39th International Conference on Automated Software Engineering (ASE'24), ACM
  (2024), \urlprefix\url{https://doi.org/10.1145/3691620.3695035}

\bibitem[{Franceschino et~al.(2021)Franceschino, Pichardie, and
  Talpin}]{franceschino_verified_2021}
Franceschino, L., Pichardie, D., Talpin, J.: Verified functional programming of
  an abstract interpreter. In: Dragoi, C., Mukherjee, S., Namjoshi, K.S. (eds.)
  Static Analysis - 28th International Symposium, {SAS} 2021, Chicago, IL, USA,
  October 17-19, 2021, Proceedings, Lecture Notes in Computer Science, vol.
  12913, pp. 124--143, Springer (2021), \doi{10.1007/978-3-030-88806-0\_6}

\bibitem[{Gries(1977)}]{Gries77}
Gries, D.: An exercise in proving parallel programs correct. Commun. {ACM}
  \textbf{20}(12), 921--930 (1977), \doi{10.1145/359897.359903}

\bibitem[{Gurfinkel and Chechik(2003)}]{GurfinkelC03}
Gurfinkel, A., Chechik, M.: Proof-like counter-examples. In: Garavel, H.,
  Hatcliff, J. (eds.) Tools and Algorithms for the Construction and Analysis of
  Systems, 9th International Conference, {TACAS} 2003, Warsaw, Poland, April
  7-11, 2003, Proceedings, Lecture Notes in Computer Science, vol. 2619, pp.
  160--175, Springer (2003), \doi{10.1007/3-540-36577-X\_12}

\bibitem[{Haltermann and Wehrheim(2022)}]{HaltermannWehrheim2022}
Haltermann, J., Wehrheim, H.: Information {Exchange} {Between} {Over}- and
  {Underapproximating} {Software} {Analyses}. In: Schlingloff, B.H., Chai, M.
  (eds.) Software {Engineering} and {Formal} {Methods}, pp. 37--54, Lecture
  {Notes} in {Computer} {Science}, Springer International Publishing, Cham
  (2022), ISBN 978-3-031-17108-6, \doi{10.1007/978-3-031-17108-6_3}

\bibitem[{Heizmann et~al.(2009)Heizmann, Hoenicke, and
  Podelski}]{Heizmann09TraceAbstraction}
Heizmann, M., Hoenicke, J., Podelski, A.: Refinement of trace abstraction. In:
  Palsberg, J., Su, Z. (eds.) Static Analysis, 16th International Symposium,
  {SAS} 2009, Los Angeles, CA, USA, August 9-11, 2009. Proceedings, Lecture
  Notes in Computer Science, vol. 5673, pp. 69--85, Springer (2009),
  \doi{10.1007/978-3-642-03237-0\_7}

\bibitem[{Hoenicke et~al.(2017)Hoenicke, Majumdar, and Podelski}]{HoenickeMP17}
Hoenicke, J., Majumdar, R., Podelski, A.: Thread modularity at many levels: a
  pearl in compositional verification. In: Castagna, G., Gordon, A.D. (eds.)
  Proceedings of the 44th {ACM} {SIGPLAN} Symposium on Principles of
  Programming Languages, {POPL} 2017, Paris, France, January 18-20, 2017, pp.
  473--485, {ACM} (2017), \doi{10.1145/3009837.3009893}

\bibitem[{{International Organization for Standardization and International
  Electrotechnical Commission}(1999)}]{ISO:C99}
{International Organization for Standardization and International
  Electrotechnical Commission}: {ISO/IEC 9899:1999 — Programming Languages
  — C}. Tech. rep., ISO/IEC JTC 1/SC 22 (1999),
  \urlprefix\url{http://www.open-std.org/jtc1/sc22/wg14/www/docs/n1124.pdf},
  {ISO/IEC 9899:1999 (C99) draft}

\bibitem[{Jourdan et~al.(2015)Jourdan, Laporte, Blazy, Leroy, and
  Pichardie}]{jourdan_verasco_2015}
Jourdan, J., Laporte, V., Blazy, S., Leroy, X., Pichardie, D.: A
  formally-verified {C} static analyzer. In: Rajamani, S.K., Walker, D. (eds.)
  Proceedings of the 42nd Annual {ACM} {SIGPLAN-SIGACT} Symposium on Principles
  of Programming Languages, {POPL} 2015, Mumbai, India, January 15-17, 2015,
  pp. 247--259, {ACM} (2015), \doi{10.1145/2676726.2676966}

\bibitem[{Kaindlstorfer et~al.(2024)Kaindlstorfer, Isychev, Wüstholz, and
  Christakis}]{Kaindlstorfer2024}
Kaindlstorfer, D., Isychev, A., Wüstholz, V., Christakis, M.: Interrogation
  testing of program analyzers for soundness and precision issues. In:
  Proceedings of the 39th International Conference on Automated Software
  Engineering (ASE'24), ACM (2024),
  \urlprefix\url{https://doi.org/10.1145/3691620.3695034}

\bibitem[{Keller(1976)}]{Keller76}
Keller, R.M.: Formal verification of parallel programs. Commun. {ACM}
  \textbf{19}(7), 371--384 (1976), \doi{10.1145/360248.360251}

\bibitem[{Klinger et~al.(2019)Klinger, Christakis, and
  W{\"{u}}stholz}]{klinger_differentially_2019}
Klinger, C., Christakis, M., W{\"{u}}stholz, V.: Differentially testing
  soundness and precision of program analyzers. In: Zhang, D., M{\o}ller, A.
  (eds.) Proceedings of the 28th {ACM} {SIGSOFT} International Symposium on
  Software Testing and Analysis, {ISSTA} 2019, Beijing, China, July 15-19,
  2019, pp. 239--250, {ACM} (2019), \doi{10.1145/3293882.3330553}

\bibitem[{Klumpp et~al.(2022)Klumpp, Dietsch, Heizmann, Sch{\"{u}}ssele,
  Ebbinghaus, Farzan, and Podelski}]{GemCutterSVCOMP}
Klumpp, D., Dietsch, D., Heizmann, M., Sch{\"{u}}ssele, F., Ebbinghaus, M.,
  Farzan, A., Podelski, A.: {Ultimate} {GemCutter} and the axes of
  generalization (competition contribution). In: Fisman, D., Rosu, G. (eds.)
  Tools and Algorithms for the Construction and Analysis of Systems - 28th
  International Conference, {TACAS} 2022, Munich, Germany, April 2-7, 2022,
  Proceedings, Part {II}, Lecture Notes in Computer Science, vol. 13244, pp.
  479--483, Springer (2022), \doi{10.1007/978-3-030-99527-0\_35}

\bibitem[{Krebbers et~al.(2017)Krebbers, Jung, Bizjak, Jourdan, Dreyer, and
  Birkedal}]{Krebbers0BJDB17}
Krebbers, R., Jung, R., Bizjak, A., Jourdan, J., Dreyer, D., Birkedal, L.: The
  essence of higher-order concurrent separation logic. In: Yang, H. (ed.)
  Programming Languages and Systems - 26th European Symposium on Programming,
  {ESOP} 2017, Uppsala, Sweden, April 22-29, 2017, Proceedings, Lecture Notes
  in Computer Science, vol. 10201, pp. 696--723, Springer (2017),
  \doi{10.1007/978-3-662-54434-1\_26}

\bibitem[{Lahav and Vafeiadis(2015)}]{LahavV15}
Lahav, O., Vafeiadis, V.: {Owicki}-{Gries} reasoning for weak memory models.
  In: Halld{\'{o}}rsson, M.M., Iwama, K., Kobayashi, N., Speckmann, B. (eds.)
  Automata, Languages, and Programming - 42nd International Colloquium, {ICALP}
  2015, Kyoto, Japan, July 6-10, 2015, Proceedings, Part {II}, Lecture Notes in
  Computer Science, vol. 9135, pp. 311--323, Springer (2015),
  \doi{10.1007/978-3-662-47666-6_25}

\bibitem[{Lamport(1977)}]{Lamport77}
Lamport, L.: Proving the correctness of multiprocess programs. {IEEE} Trans.
  Software Eng. \textbf{3}(2), 125--143 (1977), \doi{10.1109/TSE.1977.229904}

\bibitem[{Lamport(1978)}]{Lamport78}
Lamport, L.: Time, clocks, and the ordering of events in a distributed system.
  Commun. {ACM} \textbf{21}(7), 558--565 (1978), \doi{10.1145/359545.359563}

\bibitem[{Lamport(1979)}]{Lamport79}
Lamport, L.: How to make a multiprocessor computer that correctly executes
  multiprocess programs. {IEEE} Trans. Computers \textbf{28}(9), 690--691
  (1979), \doi{10.1109/TC.1979.1675439}

\bibitem[{Ley{-}Wild and Nanevski(2013)}]{Ley-WildN13}
Ley{-}Wild, R., Nanevski, A.: Subjective auxiliary state for coarse-grained
  concurrency. In: Giacobazzi, R., Cousot, R. (eds.) The 40th Annual {ACM}
  {SIGPLAN-SIGACT} Symposium on Principles of Programming Languages, {POPL}
  '13, Rome, Italy - January 23 - 25, 2013, pp. 561--574, {ACM} (2013),
  \doi{10.1145/2429069.2429134}

\bibitem[{Lomet(1977)}]{Lomet77}
Lomet, D.B.: Process structuring, synchronization, and recovery using atomic
  actions. In: Wortman, D.B. (ed.) Proceedings of an {ACM} Conference on
  Language Design for Reliable Software (LDRS), Raleigh, North Carolina, USA,
  March 28-30, 1977, pp. 128--137, {ACM} (1977), \doi{10.1145/800022.808319}

\bibitem[{Lubachevsky(1984)}]{Lubachevsky84}
Lubachevsky, B.D.: An approach to automating the verification of compact
  parallel coordination programs {I}. Acta Informatica \textbf{21}, 125--169
  (1984), \doi{10.1007/BF00289237}

\bibitem[{Mazurkiewicz(1986)}]{Mazurkiewicz86}
Mazurkiewicz, A.W.: Trace theory. In: Brauer, W., Reisig, W., Rozenberg, G.
  (eds.) Petri Nets: Central Models and Their Properties, Advances in Petri
  Nets 1986, Part II, Proceedings of an Advanced Course, Bad Honnef, Germany,
  8-19 September 1986, Lecture Notes in Computer Science, vol. 255, pp.
  279--324, Springer (1986), \doi{10.1007/3-540-17906-2\_30}

\bibitem[{McCloskey et~al.(2006)McCloskey, Zhou, Gay, and
  Brewer}]{McCloskeyZGB06}
McCloskey, B., Zhou, F., Gay, D., Brewer, E.A.: Autolocker: synchronization
  inference for atomic sections. In: Morrisett, J.G., Jones, S.L.P. (eds.)
  Proceedings of the 33rd {ACM} {SIGPLAN-SIGACT} Symposium on Principles of
  Programming Languages, {POPL} 2006, Charleston, South Carolina, USA, January
  11-13, 2006, pp. 346--358, {ACM} (2006), \doi{10.1145/1111037.1111068}

\bibitem[{Meolic et~al.(2004)Meolic, Fantechi, and Gnesi}]{MeolicFG04}
Meolic, R., Fantechi, A., Gnesi, S.: Witness and counterexample automata for
  {ACTL}. In: de~Frutos{-}Escrig, D., N{\'{u}}{\~{n}}ez, M. (eds.) Formal
  Techniques for Networked and Distributed Systems - {FORTE} 2004, 24th {IFIP}
  {WG} 6.1 International Conference, Madrid Spain, September 27-30, 2004,
  Proceedings, Lecture Notes in Computer Science, vol. 3235, pp. 259--275,
  Springer (2004), \doi{10.1007/978-3-540-30232-2\_17}

\bibitem[{Milanese and Min{\'{e}}(2024)}]{MilaneseM24}
Milanese, M., Min{\'{e}}, A.: Generation of violation witnesses by
  under-approximating abstract interpretation. In: Dimitrova, R., Lahav, O.,
  Wolff, S. (eds.) Verification, Model Checking, and Abstract Interpretation -
  25th International Conference, {VMCAI} 2024, London, United Kingdom, January
  15-16, 2024, Proceedings, Part {I}, Lecture Notes in Computer Science, vol.
  14499, pp. 50--73, Springer (2024), \doi{10.1007/978-3-031-50524-9\_3}

\bibitem[{Min{\'{e}}(2012)}]{Mine2012}
Min{\'{e}}, A.: Static analysis of run-time errors in embedded real-time
  parallel {C} programs. Log. Methods Comput. Sci. \textbf{8}(1) (2012),
  \doi{10.2168/LMCS-8(1:26)2012}

\bibitem[{Min{\'{e}}(2014)}]{Mine2014}
Min{\'{e}}, A.: Relational thread-modular static value analysis by abstract
  interpretation. In: McMillan, K.L., Rival, X. (eds.) Verification, Model
  Checking, and Abstract Interpretation - 15th International Conference,
  {VMCAI} 2014, San Diego, CA, USA, January 19-21, 2014, Proceedings, Lecture
  Notes in Computer Science, vol. 8318, pp. 39--58, Springer (2014),
  \doi{10.1007/978-3-642-54013-4\_3}

\bibitem[{Monat and Min{\'{e}}(2017)}]{Mine2017a}
Monat, R., Min{\'{e}}, A.: Precise thread-modular abstract interpretation of
  concurrent programs using relational interference abstractions. In:
  Bouajjani, A., Monniaux, D. (eds.) Verification, Model Checking, and Abstract
  Interpretation - 18th International Conference, {VMCAI} 2017, Paris, France,
  January 15-17, 2017, Proceedings, Lecture Notes in Computer Science, vol.
  10145, pp. 386--404, Springer (2017), \doi{10.1007/978-3-319-52234-0\_21}

\bibitem[{Mukherjee et~al.(2017)Mukherjee, Padon, Shoham, D'Souza, and
  Rinetzky}]{Mukherjee2017}
Mukherjee, S., Padon, O., Shoham, S., D'Souza, D., Rinetzky, N.: Thread-local
  semantics and its efficient sequential abstractions for race-free programs.
  In: Ranzato, F. (ed.) Static Analysis - 24th International Symposium, {SAS}
  2017, New York, NY, USA, August 30 - September 1, 2017, Proceedings, Lecture
  Notes in Computer Science, vol. 10422, pp. 253--276, Springer (2017),
  \doi{10.1007/978-3-319-66706-5\_13}

\bibitem[{Nanevski et~al.(2014)Nanevski, Ley{-}Wild, Sergey, and
  Delbianco}]{NanevskiLSD14}
Nanevski, A., Ley{-}Wild, R., Sergey, I., Delbianco, G.A.: Communicating state
  transition systems for fine-grained concurrent resources. In: Shao, Z. (ed.)
  Programming Languages and Systems - 23rd European Symposium on Programming,
  {ESOP} 2014, Grenoble, France, April 5-13, 2014, Proceedings, Lecture Notes
  in Computer Science, vol. 8410, pp. 290--310, Springer (2014),
  \doi{10.1007/978-3-642-54833-8_16}

\bibitem[{Nieto(2001)}]{Nieto01}
Nieto, L.P.: Completeness of the {Owicki}-{Gries} system for parameterized
  parallel programs. In: Proceedings of the 15th International Parallel {\&}
  Distributed Processing Symposium (IPDPS-01), San Francisco, CA, USA, April
  23-27, 2001, p. 150, {IEEE} Computer Society (2001),
  \doi{10.1109/IPDPS.2001.925138}

\bibitem[{Nieto(2002)}]{Nieto02}
Nieto, L.P.: Verification of parallel programs with the {Owicki}-{Gries} and
  Rely-Guarantee methods in {Isabelle}, {HOL}. Ph.D. thesis, Technical
  University Munich, Germany (2002),
  \urlprefix\url{https://mediatum.ub.tum.de/?id=601717}

\bibitem[{Nipkow and Nieto(1999)}]{NipkowN99}
Nipkow, T., Nieto, L.P.: {Owicki}/{Gries} in {Isabelle}/{HOL}. In: Finance, J.
  (ed.) Fundamental Approaches to Software Engineering, Second Internationsl
  Conference, FASE'99, Amsterdam, The Netherlands, March 22-28, 1999,
  Proceedings, Lecture Notes in Computer Science, vol. 1577, pp. 188--203,
  Springer (1999), \doi{10.1007/978-3-540-49020-3_13}

\bibitem[{{OASIS SARIF Technical Committee}(2020)}]{OASISSarif2020}
{OASIS SARIF Technical Committee}: Static analysis results interchange format
  ({SARIF}) version 2.1.0. {OASIS} standard, Organization for the Advancement
  of Structured Information Standards ({OASIS}) (2020),
  \urlprefix\url{https://docs.oasis-open.org/sarif/sarif/v2.1.0/os/sarif-v2.1.0-os.pdf}

\bibitem[{Owicki and Gries(1976)}]{OwickiG76}
Owicki, S.S., Gries, D.: Verifying properties of parallel programs: An
  axiomatic approach. Commun. {ACM} \textbf{19}(5), 279--285 (1976),
  \doi{10.1145/360051.360224}

\bibitem[{Raad et~al.(2020)Raad, Lahav, and Vafeiadis}]{RaadLV20}
Raad, A., Lahav, O., Vafeiadis, V.: Persistent {Owicki}-{Gries} reasoning: a
  program logic for reasoning about persistent programs on {Intel-x86}. Proc.
  {ACM} Program. Lang. \textbf{4}({OOPSLA}), 151:1--151:28 (2020),
  \doi{10.1145/3428219}

\bibitem[{Saan et~al.(2023)Saan, Schwarz, Erhard, Pietsch, Seidl, Tilscher, and
  Vojdani}]{Saan2023}
Saan, S., Schwarz, M., Erhard, J., Pietsch, M., Seidl, H., Tilscher, S.,
  Vojdani, V.: Goblint: Autotuning thread-modular abstract interpretation. In:
  Tools and Algorithms for the Construction and Analysis of Systems, pp.
  547--552, Springer Nature Switzerland (2023),
  \doi{10.1007/978-3-031-30820-8_34}

\bibitem[{Saan et~al.(2024)Saan, Schwarz, Erhard, Seidl, Tilscher, and
  Vojdani}]{saan_correctness_2024}
Saan, S., Schwarz, M., Erhard, J., Seidl, H., Tilscher, S., Vojdani, V.:
  Correctness witness validation by abstract interpretation. In: Dimitrova, R.,
  Lahav, O., Wolff, S. (eds.) Verification, Model Checking, and Abstract
  Interpretation - 25th International Conference, {VMCAI} 2024, London, United
  Kingdom, January 15-16, 2024, Proceedings, Part {I}, Lecture Notes in
  Computer Science, vol. 14499, pp. 74--97, Springer (2024),
  \doi{10.1007/978-3-031-50524-9\_4}

\bibitem[{Schmaltz(2012)}]{Schmaltz2012}
Schmaltz, S.B.: Towards the pervasive formal verification of multi-core
  operating systems and hypervisors implemented in {C}. Ph.D. thesis,
  Universität des Saarlandes (2012), \doi{10.22028/D291-26525}

\bibitem[{Schreiber(1997)}]{Schreiber97}
Schreiber, T.: Auxiliary variables and recursive procedures. In: Bidoit, M.,
  Dauchet, M. (eds.) TAPSOFT'97: Theory and Practice of Software Development,
  7th International Joint Conference CAAP/FASE, Lille, France, April 14-18,
  1997, Proceedings, Lecture Notes in Computer Science, vol. 1214, pp.
  697--711, Springer (1997), \doi{10.1007/BFB0030635}

\bibitem[{Schwarz et~al.(2021)Schwarz, Saan, Seidl, Apinis, Erhard, and
  Vojdani}]{Schwarz2021}
Schwarz, M., Saan, S., Seidl, H., Apinis, K., Erhard, J., Vojdani, V.:
  Improving thread-modular abstract interpretation. In: Dragoi, C., Mukherjee,
  S., Namjoshi, K.S. (eds.) Static Analysis - 28th International Symposium,
  {SAS} 2021, Chicago, IL, USA, October 17-19, 2021, Proceedings, Lecture Notes
  in Computer Science, vol. 12913, pp. 359--383, Springer (2021),
  \doi{10.1007/978-3-030-88806-0\_18}

\bibitem[{Schwarz et~al.(2023)Schwarz, Saan, Seidl, Erhard, and
  Vojdani}]{SchwarzSSEV23}
Schwarz, M., Saan, S., Seidl, H., Erhard, J., Vojdani, V.: Clustered relational
  thread-modular abstract interpretation with local traces. In: Wies, T. (ed.)
  Programming Languages and Systems - 32nd European Symposium on Programming,
  {ESOP} 2023, Paris, France, April 22-27, 2023, Proceedings, Lecture Notes in
  Computer Science, vol. 13990, pp. 28--58, Springer (2023),
  \doi{10.1007/978-3-031-30044-8\_2}

\bibitem[{Semenyuk and Dongol(2023)}]{SemenyukD23}
Semenyuk, M., Dongol, B.: Ownership-based {Owicki}-{Gries} reasoning. In: Hong,
  J., Lanperne, M., Park, J.W., Cern{\'{y}}, T., Shahriar, H. (eds.)
  Proceedings of the 38th {ACM/SIGAPP} Symposium on Applied Computing, {SAC}
  2023, Tallinn, Estonia, March 27-31, 2023, pp. 1685--1694, {ACM} (2023),
  \doi{10.1145/3555776.3577636}

\bibitem[{Sergey et~al.(2015)Sergey, Nanevski, and Banerjee}]{SergeyNB15}
Sergey, I., Nanevski, A., Banerjee, A.: Mechanized verification of fine-grained
  concurrent programs. In: Grove, D., Blackburn, S.M. (eds.) Proceedings of the
  36th {ACM} {SIGPLAN} Conference on Programming Language Design and
  Implementation, Portland, OR, USA, June 15-17, 2015, pp. 77--87, {ACM}
  (2015), \doi{10.1145/2737924.2737964}

\bibitem[{Stade et~al.(2024)Stade, Tilscher, and Seidl}]{Stade2024}
Stade, Y., Tilscher, S., Seidl, H.: The top-down solver verified: Building
  confidence in static analyzers. In: Gurfinkel, A., Ganesh, V. (eds.) Computer
  Aided Verification - 36th International Conference, {CAV} 2024, Montreal, QC,
  Canada, July 24-27, 2024, Proceedings, Part {I}, Lecture Notes in Computer
  Science, vol. 14681, pp. 303--324, Springer (2024),
  \doi{10.1007/978-3-031-65627-9\_15}

\bibitem[{Svejda et~al.(2020)Svejda, Berger, and Katoen}]{SvejdaBK20}
Svejda, J., Berger, P., Katoen, J.: Interpretation-based violation witness
  validation for {C:} {NITWIT}. In: Biere, A., Parker, D. (eds.) Tools and
  Algorithms for the Construction and Analysis of Systems - 26th International
  Conference, {TACAS} 2020, Dublin, Ireland, April 25-30, 2020, Proceedings,
  Part {I}, Lecture Notes in Computer Science, vol. 12078, pp. 40--57, Springer
  (2020), \doi{10.1007/978-3-030-45190-5\_3}

\bibitem[{{Ultimate developers}(2024)}]{ultimate}
{Ultimate developers}: {Ultimate} program analysis framework.
  \url{https://ultimate-pa.org/} (2024), accessed: 2024-09-29

\bibitem[{Vick and McMillan(2023)}]{VickM23}
Vick, C., McMillan, K.L.: Synthesizing history and prophecy variables for
  symbolic model checking. In: Dragoi, C., Emmi, M., Wang, J. (eds.)
  Verification, Model Checking, and Abstract Interpretation - 24th
  International Conference, {VMCAI} 2023, Boston, MA, USA, January 16-17, 2023,
  Proceedings, Lecture Notes in Computer Science, vol. 13881, pp. 320--340,
  Springer (2023), \doi{10.1007/978-3-031-24950-1_15}

\bibitem[{de~Vilhena et~al.(2020)de~Vilhena, Pottier, and
  Jourdan}]{de_vilhena_spy_2020}
de~Vilhena, P.E., Pottier, F., Jourdan, J.: Spy game: verifying a local generic
  solver in {Iris}. Proc. {ACM} Program. Lang. \textbf{4}({POPL}), 33:1--33:28
  (2020), \doi{10.1145/3371101}

\bibitem[{Vojdani et~al.(2016)Vojdani, Apinis, R{\~{o}}tov, Seidl, Vene, and
  Vogler}]{Vojdani2016}
Vojdani, V., Apinis, K., R{\~{o}}tov, V., Seidl, H., Vene, V., Vogler, R.:
  Static race detection for device drivers: the {G}oblint approach. In:
  Proceedings of the 31st {IEEE}/{ACM} International Conference on Automated
  Software Engineering, {ACM} (aug 2016), \doi{10.1145/2970276.2970337}

\bibitem[{Wright et~al.(2023)Wright, Dalvandi, Batty, and Dongol}]{WrightDBD23}
Wright, D., Dalvandi, S., Batty, M., Dongol, B.: Mechanised operational
  reasoning for {C11} programs with relaxed dependencies. Formal Aspects
  Comput. \textbf{35}(2), 10:1--10:27 (2023), \doi{10.1145/3580285}

\bibitem[{Zhang et~al.(2012)Zhang, Feng, Fu, Shao, and Li}]{Zhang2012}
Zhang, Z., Feng, X., Fu, M., Shao, Z., Li, Y.: A structural approach to
  prophecy variables. In: Agrawal, M., Cooper, S.B., Li, A. (eds.) Theory and
  Applications of Models of Computation - 9th Annual Conference, {TAMC} 2012,
  Beijing, China, May 16-21, 2012. Proceedings, Lecture Notes in Computer
  Science, vol. 7287, pp. 61--71, Springer (2012),
  \doi{10.1007/978-3-642-29952-0\_12}

\end{thebibliography}

\end{document}